\newcommand*{\Scale}[2][4]{\scalebox{#1}{$#2$}}%scale equations
\newcommand{\subscript}[2]{$#1 _ #2$}
\theoremstyle{plain}
\newtheorem{theorem}{Theorem}
\newtheorem{lemma}[theorem]{Lemma}
\newtheorem{proposition}[theorem]{Proposition}
\theoremstyle{definition}
\theoremstyle{remark}
\providecommand{\keywords}[1]{\textit{Keywords:} #1}
\newcommand{\ind}{\mathds{1}}
\newcommand{\V}[1]{\ensuremath{#1}\xspace}
\begin{document}
\title{\Large{A pseudo-likelihood approach to community detection  in weighted networks}}

\author[1]{\normalsize{Andressa Cerqueira}}
\author[2]{\normalsize{Elizaveta Levina}}

\affil[1]{\small{Department of Statistics, Federal University of S\~ao Carlos}}
\affil[2]{\small{Department of Statistics, University of Michigan}}

\date{\today}
\maketitle

%%%%%--------------------- Abstract -----------------%%%%%%

\begin{abstract}
Community structure is common in many real networks, with nodes clustered in groups sharing the same connections patterns. While many community detection methods have been developed for networks with binary edges, few of them are applicable to networks with weighted edges, which are common in practice.    We propose a pseudo-likelihood community estimation algorithm derived under the weighted stochastic block model for networks with normally distributed edge weights, extending the pseudo-likelihood algorithm for binary networks, which offers some of the best combinations of accuracy and computational efficiency. We prove that the estimates obtained by the proposed method  are consistent under the assumption of homogeneous networks, a weighted analogue of the planted partition model, and show that they work well in practice for both homogeneous and heterogeneous networks.    We illustrate the method on simulated networks and on a fMRI dataset, where edge weights represent connectivity between brain regions and are expected to be close to normal in distribution by construction.  
\end{abstract}

%%%%%--------------------- Keywords -----------------%%%%%%

\keywords{
community detection, weighted networks, pseudo-likelihood
}

\section{Introduction}

Network models have been a useful general tool for understanding and modeling interactions between objects in many domains.   The relationships (edges) between objects (nodes) can represent many things depending on the application: social interactions, trading partnerships, web links, packets sent between computers, disease contagion, neural connectivity, and so on.    Some settings result in binary networks, where only the presence or absence of an edge is recorded, and other settings lead to weighted networks, where an edge is associated with a weight which typically quantifies the strength of the connection.

The probabilistic modeling of networks has traditionally focused on binary networks, starting from the classical Erd\"os-R\'enyi graph \cite{erdHos1960evolution} which models edges as i.i.d.\ Bernoulli random variables.    Yet many networks encountered in practice are weighted, and many of the binary networks in the literature are obtained by thresholding raw edge weights.  For instance, the arguably most studied network with communities, the karate club dataset \cite{zachary1977information} is a binary network representing friendships, but the data were collected by observing frequency of the club members' interactions, and many other traditional social networks are recorded through similar mechanisms.   Moreover, even when there is a true binary edge (e.g., friendship on Facebook), there is often an accompanying strength measurement (e.g., how long these people have been friends, how many of each other's posts  they have liked, etc).

Network communities, a term used to loosely refer to groups of nodes sharing connectivity patterns, have been observed in many real-world networks and studied extensively. The stochastic block model (SBM) \cite{holland1983stochastic}, probably the most studied network model with communities, models a binary undirected graph with independent edges, with probability of an edge determined by community membership of the incident nodes.   A lot is now known about theory and algorithms for recovering communities in this setting;  see  \cite{abbe2017community} and references therein for a recent review.   There are also multiple extensions of the binary SBM, such as the degree-corrected SBM \citep{karrer2011stochastic}, and multiple models with overlapping communities, for example, \cite{airoldi2008mixed, zhang2014detecting, latouche2011overlapping}. A version called labeled SBM \citep{heimlicher2012community, lelarge2015reconstruction} allows for multiple different types of edges between a pair of nodes, which one could think of as an edge vector or a discrete categorical edge weight.

Our focus in this paper is on networks with real-valued edge weights.  They occur in many applications, notably brain connectivity networks obtained from various types of neuroimaging, and other biological networks such as gene-gene interactions.  Typically the weights represent some quantitative similarity measure between the nodes, such as marginal or partial correlations.   While these measures can be thresholded to obtain a binary network, there is at least empirical evidence that a lot of useful information is lost in the process \citep{relion2019network}.  

Several models have been proposed to directly model networks with continuous-valued edge weights.  For instance, \cite{aicher2014learning} proposed a generalization of the SBM for weighted edges and a variational Bayes approach to learning the latent community labels.     More recently, \cite{xu2017optimal} obtained the optimal error rate of any clustering algorithm for weighted SBM  through deriving an information-theoretic lower bound,  %expressed through the R\'enyi Divergence of order $1/2$between the edge weight distributions within and between communities.
and proposed an algorithm that achieves the optimal rate using discretization of weighted SBM into a labeled SBM.   Another line of work has focused on jointly analyzing multiple weighted networks, largely motivated by neuroimaging problems; see, for example, \cite{Levin2022recovering} and references therein.  %\liza{Also please add COSIE by Priebe's group, and MacDonald, Levina, Zhu in Biometrika 2022.} 
The work on community detection in multiple networks, however, has focused on binary networks to the best of our knowledge \citep{Arroyo2021inference, MacDonald2022latent, tang2009clustering, le2018estimating, bhattacharyya2018spectral, wang2019joint}.

Another related line of work is the submatrix localization problem, especially in the setting of  a random matrix with Gaussian entries with a constant variance.  The problem is to locate a submatrix of a known size of entries with positive means, while the rest of the matrix entries are assumed to have mean 0.   The case of one hidden submatrix can be viewed as a special case  of weighted SBM with two communities and was studied by \cite{ma2015computational} and \cite{hajek2017submatrix}.   \cite{chen2016statistical} studied the case of more than one hidden submatrix, all of the same size. 

In this paper, we consider the community detection problem for a single weighted SBM network.  We model edge weights as Gaussian random variables (in practice this may be after a suitable transformation, for example, the Fisher transform for correlations), with means and variances determined by the communities of the incident nodes.   Our main contribution is a fast and tractable  pseudo-likelihood algorithm for fitting this model, inspired by the pseudo-likelihood algorithm of \cite{amini2013pseudo} for binary networks, which remains one of the fastest and most accurate methods for community detection for both the binary SBM and the degree-corrected binary SBM.    We prove that one iteration step of our pseudo-likelihood algorithm achieves the optimal error rate derived by \cite{xu2017optimal}, up to a constant.    We characterize the optimal error rate for this setting explicitly  in terms of the means and variances of the within- and between-community edge distributions, revealing interesting trade-offs that do not appear in the binary setting.     We also show empirically that the pseudo-likelihood algorithm improves on other community detection methods when they are used to provide an initial value.     We allow for any fixed number of communities of different sizes, and do not assume assortativity in any form.  In contrast with the work \citep{aicher2014learning}, we also allow the means and variances of the edges weights to depend on the number of nodes, thus controlling the overall expected ``degree''.   % \liza{This is where we might need to add more references, and also highlight what we do differently and why it's a good thing. }\andressa{i added a comment but after I read the introduction I dont know what else I should write here, because we mentioned what the works that we cited considered in the other paragraphs, so for me it is clear hehehe }

This paper is organized as follows. Section \ref{sec:wsbm}  introduces the weighted SBM and proposes the pseudo-likelihood based EM algorithm for fitting the model. Section \ref{sec:theo_results} presents our main consistency results. The performance of the algorithm on simulated networks and comparison with other methods are presented in Section \ref{sec:simulations}.   An application to brain connectivity networks is studied in Section \ref{sec:data_analysis}. We conclude with discussion in Section \ref{sec:discussion}.

\section{A Pseudo-Likelihood Algorithm for the Weighted Stochastic Block Model}\label{sec:wsbm}
Consider a weighted undirected network $W$ with the node set $\{1,2,\dots,n\}$.    Let $C_1,C_2,\dots, C_n$ be independent and identically distributed random variables representing node community labels,  with $\mathbb{P}(C_i=k)=\pi_k$, $k=1,\dots,K$ and $\pi=(\pi_1,\dots,\pi_K)$.  The \textit{weighted stochastic block model} (WSBM) assumes that, conditionally on the node labels $c = (c_1,\dots,c_n)$, the edge weights are drawn independently from a distribution that depends only on the labels of the incident nodes. In this paper, we study the model where the number of communities $K$ is fixed and known, and the distribution of edge weights is Gaussian.  Since our goal is to derive a pseudo-likelihood algorithm, some distributional assumption is needed, and many real weighted networks are reasonably well described by a Gaussian distribution of weights.  In particular, the motivating example in Section  \ref{sec:data_analysis} is on brain connectivity networks from fMRI, with edge weights measured as Fisher-transformed Pearson correlations, which are designed to be approximately Gaussian.  Formally, we have 
\begin{equation}
%\begin{split}
\mathscr{L} (W_{ij}\,|\,C_i=k, C_j = l) =  \mathscr{N}(B_{kl}\,,\,\Sigma_{kl})\,,\qquad 1\leq i < j \leq n\\
%& W_{ii}= 0\,, \qquad i=1,\dots,n\,.
%\end{split}
\end{equation}
where $B\in \mathbb{R}^{K\times K}$ and $\Sigma \in \mathbb{R}_{+}^{K\times K}$ are symmetric matrices that contain the means and the variances of the edge weights, respectively.    Diagonal elements of the matrix $W$ typically carry no information, and in some applications may be set to 0;  for completeness, we set the diagonal elements $W_{ii} = 0$, but their distribution can be left unspecified since they are not included in any calculations.  

Zero entries  in weighted networks are sometimes interpreted as non-observed edges, e.g., in  \cite{aicher2014learning} and \cite{xu2017optimal}.   We simply treat this as an edge with weight 0 (which has probability 0 in the Gaussian setting), and assume all edge weights are observed, resulting in a dense  weighted SBM.  This is a common scenario in real weighted networks, including those from neuroimaging.  

Under this model, the log-likelihood function for an observed network $W = w$ is given by
\begin{equation}\label{def:likelihood}
l(\pi,B,\Sigma;w) = \log\left( \sum\limits_{e\,\in\{1,\dots,K\}^n}p(w,e\,|\,\pi,B,\Sigma) \right)\,,
\end{equation}
where $e$ is an arbitrary community assignment, 
%\begin{equation}
%&p({\bf{w}},{{\bf{c}}}\,|\,B,\Sigma) = \prod\limits_{k=1}^n \pi_k^{\,n_k({\bf c})}\prod\limits_{1\leq i < j \leq n}\,\prod\limits_{k,l=1}^K\left[ \dfrac{1}{\sqrt{2\pi\Sigma_{kl}}}  \right]^{\ind\{c_i=k,c_j=l\}}\exp\Scale[0.85]{\left\lbrace  -\dfrac{(w_{ij}-B_{kl})^2}{2\Sigma_{kl}}\ind\{c_i=k,c_j=l\}\right\rbrace}\\
%\end{equation}
\begin{align}
  \label{def:joint}
  p(w,e\,|\,\pi, B,\Sigma) &  =    \prod_{k=1}^K \pi_k^{\,n_k(e)}  \times \\
                           & 
                             \prod_{k,l=1}^K\left[ 2\pi\Sigma_{kl} \right]^{- n_{kl}(e)/2}
                              \exp\left\{  -\sum_{1\leq i < j \leq n}\dfrac{(w_{ij}-B_{kl})^2}{2\Sigma_{kl}} \ind\{e_i=k,e_j=l\}\right\},                   
\end{align}
and 
\begin{equation}
n_k(\V{e}) = \sum\limits_{i=1}^n\ind\{e_i=k\} \ ,  \ n_{kl}(e) = \sum\limits_{ 1\leq i < j \leq n}\mathds{1}\{e_i=k,e_j=l\}\,
\end{equation}
% \andressa{ here I'm using letter $e$ for communities just to save $c$ for the true one. Do you think is it confusing?}  \liza{I think $e$ is fine, but I am more worried about $w$ as the observed network.  I am going to try adding $W=w$ in places, and then we'll see... For the adjacency matrix, people typically happily write $A$ for both the observed matrix and the random version, though technically you are right that they should be different.   It's just that $w$ for a matrix looks weird.  }

Given an observed weighted network $W=w$, our goal is to estimate the label assignments $c$. Since the log-likelihood function \eqref{def:likelihood} involves a sum over all possible labels configurations ($K^n$ terms), maximizing the likelihood or applying the EM algorithm directly is not tractable, as discussed for the binary networks case in \cite{amini2013pseudo}.   \cite{aicher2014learning} proposed a variational Bayes approach for the case of edge weights drawn from exponential family distributions and demonstrated its good empirical performance, but variational Bayes tends to scale poorly with the number of nodes and there are no theoretical performance guarantees available for this method.  Another proposal is the discretization-based rate-optimal method by \cite{xu2017optimal}, which works by converting a weighted network to a labeled network and by applying spectral methods to labeled networks and does not make a distributional assumption about edge weights.  This method achieves the optimal error rate for a discretization level that satisfies theoretical requirements. However, in practice the performance of this method depends significantly on the choice of the level of discretization. 

% \liza{In the paper, do they cross-validate for it, or just use a fixed one, or ... ?} \andressa{They applied the method on simulated data and they used a discretization level based on the theoretical results; I'm using the same level in the simulations. }   \liza{Right - so you need to write a couple of sentences explaining this, and then delete all these comments. :)}

%\section{Pseudo-likelihood algorithm}\label{sec:pseudo}

Inspired by the pseudo-likelihood approach of \cite{amini2013pseudo} for binary networks, we aim to replace the full intractable likelihood function \eqref{def:likelihood} with an approximate likelihood of appropriate sums of edge weights.   
Let $\V{e}=(e_1,\dots,e_n)\in \{1,\dots ,K\}^n$ be an arbitrary vector of node labels. For $i=1,\dots,n$ and $k=1,\dots,K$, we define $S_{ik}(e)$ as the sum of weights of edges between node $i$ and all nodes in community $k$, that is, 
\begin{equation}\label{def:block_sum}
S_{ik}(e) = \sum\limits_{j=1}^nW_{ij}\ind\{e_j=k\}\, , 
\end{equation}
and write $s_{ik}$ for the corresponding observed quantity.   For each node $i$, define the vector of block sums $\V{S}_i(e)=(S_{i1}(e),\dots,S_{iK}(e))$.  
%\liza{We can leave it like this for now, but in general I prefer to write a macro for vectors rather than hard-code mathbf or any other - because journals may have different rules and then it's easy to change. } 
Let $R$ be the $K\times K$ confusion matrix between the label vector $\V{e}$ and the true labels $\V{c}$, defined by 
\begin{equation}
R_{kl} = \dfrac{1}{n}\sum\limits_{i=1}^n\ind\{e_i=k,c_i=l\}\,.
\end{equation}
Conditionally on the true labels $\V{c}$, $\{S_{i1}(e),\dots, S_{iK}(e)\}$ are mutually independent random variables. 
Moreover, again  conditionally on $c$, for a node $i$ with $c_i=k$, each variable $S_{il}$, $l = 1, \dots, K$, follows the normal distibtuion 
%\begin{equation}\label{eq:gaussian_s}
%  \mathscr{L} (S_{il}(e)) =
$\mathscr{N}(P_{kl}\,,\Lambda_{kl})$, 
%\,,
%\end{equation}
where the mean and variance are given by
\begin{align*}
  P_{kl} & =nR_{l}.B._{k} \, , \ \Lambda_{kl}=nR_{l}.\Sigma._{k}\, , 
\end{align*}
and $M_{k}.$ and $M._{k}$ denote, respectively, the $k$th row and column of a matrix $M$.

The form of $K \times K$ parameter matrices $P$ and $\Lambda$ suggests a Gaussian mixture distribution, and indeed each random vector $\V{S}_i(e)$ can be written as a mixture of $K$  Gaussian random vectors, with the probability distribution function 
\begin{equation}
p(\,\V{s}_i(e)\, ; \,\pi,P,\Lambda) = \sum\limits_{l=1}^K\,\pi_l\,p(\,\V{s}_i(e)\, ; \,P_{l}.\,,\Lambda_{l}.)\,.
\end{equation}
%
%\begin{equation}
%p(\mathbf{s_i}\,|\,\pi,P,\Lambda) = \sum\limits_{l=1}^K\,\pi_l\,p(\mathbf{s_i}\,|\,P,\Lambda,c_i=l)
%\end{equation}
Since we consider undirected networks, the matrix $W$ is symmetric, and the block sums for nodes $i$ and $j$ are not independent.   Specifically, they share exactly one common summand, $W_{ij} = W_{ji}$, while all other terms in the sums are independent by assumption.   The pseudo-likelihood part of our approach is to ignore this fairly weak dependence, and treat the row block sums as independent, leading to the % for each pair of nodes $i$ and $j$, $s_{ik}(e)$ and $s_{jk}(e)$ are not independent random variables. In our approach, the dependence between the variables $\{\mathbf{s}_1(e),\dots,\mathbf{s}_n(e)\}$ is ignored and the distribution $p(\,\mathbf{s_i}(e)\,|\,P_{l}.\,,\Lambda_{l}.)$ is written by the product of the Gaussian random variables in \eqref{eq:gaussian_s}. Thus, the
pseudo log-likelihood function  (up to a constant) given by 
\begin{equation}\label{def:pseudo}
\ell_{\mathrm{PL}}(\pi,P,\Lambda;\{\V{s}_i(e)\}) = \sum\limits_{i=1}^n\log\left(  \sum\limits_{l=1}^K\,\pi_l \prod\limits_{k=1}^K\dfrac{1}{\sqrt{\Lambda_{lk}}}\exp\left\lbrace \dfrac{- (s_{ik}(e)-P_{lk})^2}{2\Lambda_{lk}} \right\rbrace \right)
\end{equation}
This function is the log-likelihood of a Gaussian mixture model, with mixture components labels matching the distribution of the latent communities, and thus fitting this model, which can be done by a standard EM algorithm for Gaussian mixture models, allows us to estimate the true labels $c=(c_1,\dots,c_n)$.  The EM algorithm starts from an initial value of the labels, say $c_0$, estimates the parameters given the labels, then updates the labels, and repeats this process either until convergence or for a fixed number of iterations $T$.    The steps of the EM algorithm are given in Algorithm \ref{alg:pseudo}.

\begin{algorithm}[H]
\caption{The pseudo-likelihood algorithm for estimating labels}
\label{alg:pseudo}
\begin{algorithmic}[]
\item[]\hspace{-0.7cm} \textbf{Input:} Initial labeling $c_0$, number of communities $K$ and the network matrix $W$ 
\item[]\hspace{-0.7cm} \textbf{Output:} The estimated label vector $\hat{c}$ \vspace{0.2cm}
\State 1. Initialize parameters $\widehat{\pi}_l( c_0)$, $\widehat{R}=\mathrm{diag}(\widehat{\pi}_1( c_0),\dots,\widehat{\pi}_K( c_0))$, $\widehat{P}_{lk}=n\widehat{R}_{k}.\widehat{B}._{l}$ and $\widehat{\Lambda}_{lk}=n\widehat{R}_{k}.\widehat{\Sigma}._{l}$.
%\State  \textbf{repeat} steps (6) - (11) $T$ times
%\REPEAT ~steps (6) - (11) $T$ times
\State 2. \textbf{Repeat} $T$ times
\State\hspace{1cm} 3. \textbf{Repeat} until the parameter estimates $\hat\pi$, $\hat P$ and $\hat \Lambda$ converge
\State\hspace{2cm} 3.1. Compute the block sums using \eqref{def:block_sum}.
\State\hspace{2cm} 3.2.  Estimate the probabilities $\mathbb{P}_{_{PL}}(c_i=l|\,\mathbf{s}_i(e)\,)$ by
\begin{equation}
\widehat{\pi}_{il} = \dfrac{\,\widehat{\pi}_l \prod\limits_{k=1}^K\dfrac{1}{(\widehat{\Lambda}_{lk})^{1/2}}\exp\left\lbrace \dfrac{- (s_{ik}(e)-\widehat{P}_{lk})^2}{2\widehat{\Lambda}_{lk}} \right\rbrace}{\sum\limits_{m=1}^K\,\widehat{\pi}_m \prod\limits_{k=1}^K\dfrac{1}{(\widehat{\Lambda}_{mk})^{1/2}}\exp\left\lbrace \dfrac{- (s_{ik}(e)-\widehat{P}_{mk})^2}{2\widehat{\Lambda}_{mk}} \right\rbrace}\,.
\end{equation}
\State\hspace{2cm} 3.3. Update the parameter values: 
\begin{equation}
\widehat{\pi}_l=\dfrac{1}{n}\sum\limits_{i=1}^n\widehat{\pi}_{il}\, , \  \widehat{P}_{lk}=\dfrac{\sum\limits_{i=1}^n\widehat{\pi}_{il}s_{ik}(e)}{\sum\limits_{i=1}^n\widehat{\pi}_{il}}\, , \ \widehat{\Lambda}_{lk}=\dfrac{\sum\limits_{i=1}^n\widehat{\pi}_{il}(s_{ik}(e)-\widehat{P}_{lk})^2}{\sum\limits_{i=1}^n\widehat{\pi}_{il}}\,.
\end{equation}

%\UNTIL{the parameters estimates converge}
\State\hspace{1cm}\label{step_alg}  4. Update the labels: $e_i=\arg\max\limits_{l=1,\dots,K}\widehat{\pi}_{il}$. %and return to (5).\\
\item[]\hspace{-0.7cm} \textbf{Return:} $\hat c = e$
\end{algorithmic}
\end{algorithm}

Once we have estimated the labels $\hat c$, the parameters $\pi$, $B$ and $\Sigma$ can be easily obtained in closed form by maximizing the complete likelihood \eqref{def:joint}.  For any fixed label assignment $e$, the likelihood \eqref{def:joint} is maximized by 
\begin{align*}
\widehat{\pi}_k(e)& = \frac{n_k(e)}{n}\,, \\
\widehat{B}_{kl}(e) & = \dfrac{1}{n_{kl}(e)}\sum\limits_{1\leq i < j \leq n}w_{ij}\mathds{1}\{e_i=k,e_j=l\} \\
\widehat{\Sigma}_{kl}(e) & = \dfrac{1}{n_{kl}(e)}\sum\limits_{1\leq i < j \leq n}(w_{ij}-\widehat{B}_{kl})^2\,\mathds{1}\{e_i=k,e_j=l\}\,.
\end{align*}
Plugging in the labels found by Algorithm 1 gives the final parameter estimates.

As always, how well an EM algorithm performs depends on the initial value $c_0$.   The consistency analysis in Section \ref{sec:theo_results} quantifies this dependence in terms of the fraction of initial labels that match the true labels $c$.  In practice, the EM algorithm can be initialized with the output of any clustering algorithm, such as spectral clustering;  we discuss  different choices of the initial value in detail in Section \ref{sec:simulations}.
\section{Consistency results}\label{sec:theo_results}
We establish consistency of the estimated node labels $\hat{c}$ obtained from the pseudo-likelihood algorithm as the number of nodes $n$ grows.   We focus on what is called weak consistency in the literature: the fraction of mislabeled nodes converges to zero in probability as $n \rightarrow \infty$.   In this framework, we treat $c$ as an unknown parameter and define the error for an estimate $\hat{c}$ as
\begin{equation}\label{eq:def_loss}
L(\hat{c},c) = \min\limits_{\phi\in\Phi_K}\dfrac{1}{n} \sum\limits_{i=1}^n\ind\{ \hat{c}_i\neq \phi(c_i) \}\,,
\end{equation}
where $\Phi_K$ is the set of all permutations of community labels $\{1, \dots, K\}$.  Weak consistency requires that $\mathbb{P}(L(\hat c, c) > 0) \rightarrow 0$.  

We study consistency under the  \textit{homogeneous} weighted SBM, where all within-community edge weights have the same distribution, and so do all between-community edge weights.  This is a common theoretical framework, often called the planted partition model for binary networks.     In our setting, we assume that the  $K\times K$ mean matrix $B$ and the $K\times K$ variance matrix $\Sigma$ are given by
\begin{equation}\label{eq:model}
B_{kl}=
\begin{cases}
a, \text{ if } k=l\\
b, \text{ if } k\neq l
\end{cases} \qquad \text{and} \qquad
\Sigma_{kl}=
\sigma^2, \, \text{for all } k,l\,,
\end{equation}
where $a,b\in \mathbb{R}$ and $\sigma^2 >0$.  All of the parameters $a$, $b$ and $\sigma^2$ can vary with $n$, but most of the time we suppress this dependence to simplify notation.

Intuitively, the larger the absolute difference between the means $|a-b|$ is relative to $\sigma$, the easier it should be to recover communities. Besides that, as with all SBM-based community detection, the problem should get harder for larger $K$ and for unbalanced community sizes.  Yet even in the homogeneous setting, it is clear that the trade-offs for weighted SBM are more complex than they are for the binary SBM, not least because both the mean and the variance parameters are involved.  Our goal is to establish conditions for consistency that make these trade-offs as explicit as possible.

We will study a single label update step (4) of the algorithm, which updates the estimated labels  $\hat c^{(t)}$ to $\hat c^{(t+1)}$. %Since we study a single step, we suppress the dependence on $t$ for simplicity, writing $\hat c_0$ for the value before the update and $\hat c$ for the value after the update.    
Then for each node $i$, the label update is given by  
\begin{equation}\label{eq:EM_estimate}
\hat{c}^{(t+1)}_i %(\hat{c}^{(t)})
 = \arg\max_{k=1, \dots,K} \left\lbrace \hat{\pi}_k(\hat{c}^{(t)}) - \left( \sum\limits_{m=1}^K  \dfrac{(s_{im}(\hat{c}^{(t)})-\hat{P}_{km}(\hat{c}^{(t)}))^2}{2\hat{\Lambda}_{km}(\hat{c}^{(t)})} + \dfrac{1}{2}\log\hat{\Lambda}_{km}(\hat{c}^{(t)})\right) \right\rbrace , 
\end{equation}
where $\hat{P}(\hat{c}^{(t)})=n(R(\hat{c}^{(t)})\hat{B}(\hat{c}^{(t)}))^T$, $\hat{\Lambda}(\hat{c}^{(t)})=n(R(\hat{c}^{(t)})\hat{\Sigma}(\hat{c}^{(t)}))^T$,  and $\hat{c}^{(0)}\equiv c_0$.

Since the estimator \eqref{eq:EM_estimate}  depends on $\hat{\pi}_k(\hat{c}^{(t)}),\hat{P}(\hat{c}^{(t)})$ and $\hat{\Lambda}(\hat{c}^{(t)})$ we would expect the consistency results to depend on how good these estimates are.  For the parameter values $a$ and $b$, we only require a very mild condition, a correct ordering of $\hat a(\hat{c}^{(t)})$ and $\hat b(\hat{c}^{(t)})$, and will assume the parameter estimates belong to the set
\begin{equation}\label{eq:assumption_parameters}
S_{a,b}=\{\, (\hat{a},\hat{b},\hat{\sigma}^2)\in \mathbb{R}^3\, |\, (\hat{a}-\hat{b})(a-b) > 0 \text{ and } \hat{\sigma}^2 >0 \,\}\,.
\end{equation}

For the initial labels $c_0$, we will express all the consistency results in terms of the proportion of labels in $c_0$ that match the true community labels $c$, up to a permutation of community labels $1, \dots, K$.    All consistency results on community detection involve this permutation which is luckily irrelevant in practice, and to keep notation simple, we will omit it from further discussion.   
The proportion of correct initial values is not relevant asymptotically, but our empirical results in Section \ref{sec:simulations} show, not surprisingly, that in practice starting from a good initial value is beneficial.  

\subsection{Balanced communities}

We start from the case of balanced communities, that is, we assume each community has  $m$ nodes and $n = mK$.
We further assume the initial labeling $c_0 \in \{1,\dots,K\}^n$ matches $\gamma m$ labels in each community, resulting in the overall matching proportion $\gamma \in (0,1)$.   The remaining misclassified $(1-\gamma)m$ nodes of each community are assumed to be distributed equally between the other $K-1$ community assignments.   This set of initial partitions can be formally written as 
\begin{align*}
  \mathcal{E}_{\gamma} = \big\{  & e \in \{1,\dots,K\}^n : \text{ for all } k, l = 1, \dots, K   \\ 
                                 &  \sum\limits_{i=1}^n\ind\{e_i=k,c_i=l\} =
                                   \gamma m \ind(k=l) + 
                                   \dfrac{(1-\gamma) m}{K-1} \ind (k\neq l) \big\}
\end{align*}
We emphasize that we do not know which initial labels are correct, only the proportion $\gamma$.

For any $\gamma\in(0,1)$, let $h(\gamma)=-\gamma\log(\gamma)-(1-\gamma)\log(1-\gamma)$ be the binary entropy and let $\kappa_{\gamma}(n) = \frac{1}{n}\left(\log n - \log \left( 4\pi\gamma(1-\gamma)+\frac{1}{3n}\right)\right)$.
The following theorem gives a probabilistic upper bound on the error of a single update step of the algorithm given by estimator \eqref{eq:EM_estimate}. Without loss of generality, we analyze the first step of the algorithm, that is, set $t=0$ and omit $t$ from now on.

\begin{theorem}[Balanced case]\label{theo:consitency}
Assume that $\pi_1=\dots=\pi_K=1/K$.  Let the initial labeling $e\in \mathcal{E}_{\gamma}$ and let $\hat{c}(e)$ be the estimate of the labels obtained from \eqref{eq:EM_estimate}.  
For $a,b\in \mathbb{R}$, $a\neq b$, $\sigma^2 > 0$, $\gamma\in (0,1)$, $\gamma\neq\frac{1}{K}$ we have %\liza{Where do $\hat a$ and $\hat b$ enter - only in the supremum?  Do you need to say something about their role in the definition of $\hat c$?} \andressa{Their role come from equation \ref{eq:EM_estimate}}   
\begin{equation}\label{eq:upper_bound_risk_bal}
\begin{split}
\sup\limits_{e\in \mathcal{E}_{\gamma}}\sup\limits_{\hat{a},\hat{b},\hat{\sigma}^2\,\in\, S_{a,b}}\,\mathbb{E}\left[\,L(\hat{c}\,(e),c)\,\right] \,\leq\, (K-1)\exp\left\lbrace -\dfrac{1}{4} \dfrac{(\gamma K-1)^2}{K(K-1)^2}\dfrac{n(a-b)^2}{\sigma^2}\right\rbrace\,,
\end{split}
\end{equation}
and
\begin{equation}\label{eq:Prob_upper_bound}
\begin{split}
&\mathbb{P}\left(\,\sup\limits_{e\in \mathcal{E}_{\gamma}}\,\sup\limits_{\hat{a},\,\hat{b},\,\hat{\sigma}^2\,\in\, S_{a,b}}L(\hat{c}\,(e),c) > \exp\left\lbrace -\dfrac{1}{8} \dfrac{(\gamma K-1)^2}{(K-1)^2}\dfrac{n}{K}\dfrac{(a-b)^2}{\sigma^2}\right\rbrace\,\right) \\
&\hspace{3,5cm} \leq ({K-1})\exp\left\lbrace -n\left(\dfrac{1}{8} \dfrac{(\gamma K-1)^2}{K(K-1)^2}\dfrac{(a-b)^2}{\sigma^2} - C(n,\gamma)\right)\right\rbrace
\end{split}
\end{equation}
%\begin{equation}
%\scalebox{0.9}{$\mathbb{P}\left(\,\sup\limits_{e\in \mathcal{E}_{\gamma}}\,\sup\limits_{\hat{a},\,\hat{b},\,\hat{\sigma}^2\,\in\, \hat{P}_{a,b,\sigma^2}}L(\hat{c}\,(e),c) > \epsilon\,\right) \leq {\dfrac{K-1}{\epsilon}}\exp\left\lbrace -n\left(\dfrac{1}{4} \dfrac{(\gamma K-1)^2}{K(K-1)^2}\dfrac{(a-b)^2}{\sigma^2} - C(n,\gamma)\right)\right\rbrace\,.$}
%\end{equation} 
where $C(n,\gamma) = h(\gamma) + \kappa_{\gamma}(2n/K)+(1-\gamma)\log(K-1\,)$.
As long as 
\begin{equation}\label{eq:condition_consistency}
\frac{1}{8} \frac{(\gamma K-1)^2}{K(K-1)^2}\dfrac{(a-b)^2}{\sigma^2} > C(n,\gamma) \, , 
\end{equation}
the pseudo-likelihood estimator is uniformly weakly consistent.
\end{theorem}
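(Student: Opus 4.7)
My plan is to exploit the symmetry of the homogeneous balanced setting to reduce the pseudo-likelihood decision rule in \eqref{eq:EM_estimate} to a simple rule that is \emph{invariant} to the plug-in values $(\hat a, \hat b, \hat\sigma^2) \in S_{a,b}$, and then to control the resulting Gaussian tails. For any $e \in \mathcal{E}_\gamma$ the empirical proportions are $n_k(e)/n = 1/K$, so $\hat\pi_k$ is constant in $k$, and the confusion matrix $R(e)$ has $\gamma/K$ on the diagonal and $(1-\gamma)/(K(K-1))$ off-diagonal. Because the homogeneous model forces $\hat\Sigma \equiv \hat\sigma^2\mathbf{1}\mathbf{1}^T$, the row sums $\sum_l R(e)_{ml} = 1/K$ make $\hat\Lambda_{km}(e) = m\hat\sigma^2$ constant in $(k,m)$, killing both the $\log\hat\Lambda$ and $\hat\pi$ terms. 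A direct computation gives $\hat P_{km}(e) = P_1 := m[\gamma\hat a + (1-\gamma)\hat b]$ when $m = k$ and $\hat P_{km}(e) = P_2 := m[\tfrac{1-\gamma}{K-1}\hat a + (1-\tfrac{1-\gamma}{K-1})\hat b]$ otherwise. Expanding the quadratic in \eqref{eq:EM_estimate} and discarding terms constant in $k$, the rule collapses to $\hat c_i = \arg\max_k s_{ik}(e)\,\mathrm{sgn}(P_1 - P_2)$. Since $P_1 - P_2 = m\tfrac{\gamma K - 1}{K-1}(\hat a - \hat b)$ and $(\hat a - \hat b)(a - b) > 0$ on $S_{a,b}$, this sign equals $\mathrm{sgn}[(\gamma K - 1)(a-b)]$, so the rule depends on $(\hat a, \hat b, \hat\sigma^2)$ only through a single sign and is therefore constant on $S_{a,b}$.

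Next I would compute the conditional distribution of the block sums $(S_{i1}(e),\ldots,S_{iK}(e))$ given $c$. These are mutually independent Gaussians with variance $m\sigma^2$ each, and for $c_i = k^*$, the counts prescribed by $\mathcal{E}_\gamma$ give $E[S_{ik^*}] = m[\gamma a + (1-\gamma)b]$ and $E[S_{ik}] = m[\tfrac{1-\gamma}{K-1}a + (1 - \tfrac{1-\gamma}{K-1})b]$ for $k \neq k^*$. Subtracting yields $E[S_{ik^*} - S_{ik}] = m\tfrac{\gamma K - 1}{K-1}(a-b)$ and $\mathrm{Var}(S_{ik^*} - S_{ik}) = 2m\sigma^2$. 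A union bound over $k \neq c_i$ combined with the standard subgaussian tail $\Phi(-x) \leq e^{-x^2/2}$ yields $P(\hat c_i \neq c_i \mid c) \leq (K-1)\exp\{-\tfrac{m(\gamma K - 1)^2(a-b)^2}{4(K-1)^2 \sigma^2}\}$ uniformly in $e \in \mathcal{E}_\gamma$ and in $(\hat a, \hat b, \hat \sigma^2) \in S_{a,b}$, thanks to the previous step. Averaging $\ind\{\hat c_i \neq c_i\}$ over $i$ and using $m = n/K$ produces \eqref{eq:upper_bound_risk_bal}.

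For the probability bound \eqref{eq:Prob_upper_bound} I would combine Markov's inequality with a union bound over $\mathcal{E}_\gamma$. Applying Markov at a threshold equal to the square root of the expected-loss bound halves the exponent, producing the threshold displayed on the right-hand side of the event $\{L > \cdot\}$ in \eqref{eq:Prob_upper_bound}. The cardinality $|\mathcal{E}_\gamma|$ is a product of $K$ multinomial coefficients of the form $\binom{m}{\gamma m,\, a,\ldots,a}$ with $a = (1-\gamma)m/(K-1)$; a finite-sample Stirling inequality applied to $\binom{m}{\gamma m}$, combined with the factor $(K-1)^{(1-\gamma)m}$ from distributing the remaining mass across the $K-1$ off-diagonal labels, gives $\log|\mathcal{E}_\gamma| \leq n C(n,\gamma)$, where the correction term $\kappa_\gamma(2n/K)$ is exactly what Stirling's non-asymptotic remainder produces. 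Multiplying the pieces yields \eqref{eq:Prob_upper_bound}, and weak consistency follows from \eqref{eq:condition_consistency}: the exponent becomes negative and linear in $n$, so $L(\hat c, c)$ falls below a vanishing threshold with probability tending to one.

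The main obstacle I anticipate is the algebraic reduction in the first step. All the cancellations (constancy of $\hat\pi_k$, of $\hat\Lambda_{km}$, and of $\sum_m \hat P_{km}^2$ across $k$) rely simultaneously on the balanced structure and on the homogeneous SBM parameterization; if any one of them failed, dependence on the unknown plug-in values $(\hat a, \hat b, \hat\sigma^2)$ would persist and the uniformity over $S_{a,b}$ would be lost. The remaining pieces are standard Gaussian tail arithmetic and careful Stirling-type bookkeeping to match the precise form of $\kappa_\gamma$ in $C(n,\gamma)$, which is routine but must be carried out non-asymptotically.
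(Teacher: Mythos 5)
Your proposal is correct and follows essentially the same route as the paper: the reduction of the update rule to a sign-invariant comparison of block sums over $S_{a,b}$ is the paper's Lemma \ref{lemma:estimate_label}, the per-node Gaussian tail plus union bound over the $K-1$ competing labels is its Proposition \ref{prop:mis_prob_balanced}, and the Markov-inequality threshold at half the exponent combined with the $\binom{m}{\gamma m}^K(K-1)^{(1-\gamma)mK}$ cardinality bound (via a non-asymptotic Stirling estimate yielding $C(n,\gamma)$) is exactly the paper's final step. The only difference is presentational: you phrase the decision rule as $\arg\max_k s_{ik}(e)\,\mathrm{sgn}(P_1-P_2)$ rather than the paper's case-by-case analysis on $\gamma\lessgtr 1/K$ and $a\lessgtr b$, which is the same content.
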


Theorem \ref{theo:consitency} holds both for $a>b$ and $a<b$.   Further, when parameter values scale with the number of nodes $n$,  the consistency result of Theorem \ref{theo:consitency} holds as long as 
\begin{equation}\label{eq:cond_vary_n}
 n\dfrac{(a_n-b_n)^2}{\sigma_n^2} \to \infty \qquad \text{when}\qquad n\to \infty\,.
\end{equation}

Condition \eqref{eq:condition_consistency} in Theorem \ref{theo:consitency} can be viewed as a requirement on the minimum difference between the distributions of within-community and between-community edges.     This is not a particularly strong requirement since the term $\kappa_{\gamma}(2n/K)$ in $C(n,\gamma)$ goes to zero as $n$ grows and the binary entropy term is bounded by $1$.   Empirical studies in Section \ref{sec:simulations} confirm the method works well even when the difference in the means is small.  

A recent result by \cite{xu2017optimal} established the optimal error rate for the weighted SBM, for any clustering algorithm, as a function of the Renyi divergence of order $1/2$ between the within- and between-community edge distributions.
We compute this bound explicitly for our model \eqref{eq:model}, with $N(a,\sigma^2)$ and $N(b,\sigma^2)$ distributions for the within- and between-community edges, respectively. In this case, the lower bound from \cite{xu2017optimal}, for the balanced communities case, is given by 
\begin{equation}
\exp\left\lbrace - (1+o(1)\,)\frac{n}{K}\dfrac{(a-b)^2}{4\sigma^2}\right\rbrace\, , 
\end{equation}
and our upper bound \eqref{eq:upper_bound_risk_bal} matches this theoretical lower bound up to a constant. Moreover, by \eqref{eq:Prob_upper_bound}
\begin{equation}
\lim\limits_{n\to \infty}\mathbb{P}\left(\,\sup\limits_{e\in \mathcal{E}_{\gamma}}\,\sup\limits_{\hat{a},\,\hat{b},\,\hat{\sigma}^2\,\in\, S_{a,b}}L(\hat{c}\,(e),c) > \exp\left\lbrace -\dfrac{1}{8} \dfrac{(\gamma K-1)^2}{(K-1)^2}\dfrac{n}{K}\dfrac{(a-b)^2}{\sigma^2}\right\rbrace\,\right) =0\,,
\end{equation}
when $\,\frac{1}{8} \frac{(\gamma K-1)^2}{K(K-1)^2}\frac{(a-b)^2}{\sigma^2} > C(n,\gamma)$. This implies that the pseudo-likelihood algorithm achieves the optimal rate up to a constant.

\subsection{Unbalanced communities}
The case of unbalanced communities is substantially more complicated, because now both the number of nodes and the proportion of correct initial labels in each community affect the performance.   To keep the technical details manageable and focus on understanding trade-offs, we limit our study of the unbalanced case to $K=2$.  

Let  $n_k = \sum\limits_{i=1}^n\ind\{c_i=k\}$ be the number of nodes in community $k$, $k = 1,2$, and let $\pi_k= n_k/n$.     Assume that the initial labeling $c_0\in \{1,2\}^n$, up to a permutation of labels,  matches $\gamma_1n_1$ labels in community $1$ and $\gamma_2n_2$ labels in community $2$, or in other words, the initial label vector belongs to the set 
\begin{equation}
\mathcal{E}_{\,\gamma_1,\gamma_2} = \left\lbrace e\in \{1,2\}^n \,:\, \sum\limits_{i=1}^n\ind\{e_i=k,c_i=k\} = \gamma_kn_k,\, k=1,2 \right\rbrace\,.
\end{equation}

For $e\in \mathcal{E}_{\,\gamma_1,\gamma_2}$, the confusion matrix $R$ is given by
\begin{equation}
R(e) = \left(\begin{array}{cc}
\gamma_1\dfrac{n_1}{n} & (1-\gamma_2)\dfrac{n_2}{n} \\ 
(1-\gamma_1)\dfrac{n_1}{n} & \gamma_2\dfrac{n_2}{n}
\end{array}\right) =  \left(\begin{array}{cc}
\gamma_1\pi_1 & (1-\gamma_2)\pi_2 \\ 
(1-\gamma_1)\pi_1 & \gamma_2\pi_2
\end{array}\right) . 
\end{equation}

For any $e\in \mathcal{E}_{\,\gamma_1,\gamma_2}$, observe that the number of nodes in each community can be expressed as 
\begin{eqnarray*}
\tilde{n}_1 &  =  & \sum\limits_{i=1}^n\ind\{e_i=1\} = \gamma_1n_1 + (1-\gamma_2)n_2 \, ,  \\
\tilde{n}_2 &  = & \sum\limits_{i=1}^n\ind\{e_i=2\} = (1-\gamma_1)n_1 + \gamma_2n_2\,.
\end{eqnarray*}
The corresponding proportions of nodes in each community of $e\in \mathcal{E}_{\,\gamma_1,\gamma_2}$ are then $\tilde{\pi}_1 = \gamma_1\pi_1 + \pi_2(1-\gamma_2)$ and $
\tilde{\pi}_2 = (1-\gamma_1)\pi_1 + \pi_2\gamma_2$.  
Conditional on the true labels $c$, the proportion of nodes in each community defined by $e$ depends only on the known matching proportions $\gamma_1$ and $\gamma_2$, and on the true proportions $\pi_1$ and $\pi_2$. Define the quantities
\begin{equation}\label{eq:unb_beta}
\begin{split}
&\beta_1= \tilde{\pi}_2\left((1-\gamma_2)\pi_2 - \gamma_1\pi_1\right) , \\
&\beta_2= \tilde{\pi}_1\left((1-\gamma_1)\pi_1-\gamma_2\pi_2\right) . 
\end{split}
\end{equation}
For parameter estimates $\hat{a},\hat{b},\hat{\sigma}^2\in \hat{P}_{a,b,\sigma^2}$, define the function
\begin{equation}
\begin{split}
F(x,y) \,=\, & (-2x+\hat{a}+\hat{b})\left(\beta_1\gamma_1\pi_1-\beta_2(1-\gamma_1)\pi_1\right) \\
& \qquad + (-2y+\hat{a}+\hat{b})\left(\beta_1(1-\gamma_2)\pi_2-\beta_2\gamma_2\pi_2\right) \, . 
\end{split}
\end{equation}By \eqref{eq:unb_beta}, the values of $\beta_1$ and $\beta_2$ depend only on $\gamma_1$, $\gamma_2$, $\pi_1$ and $\pi_2$, and we suppress this dependence to simplify  notation. Theorem \ref{prop:unb_misclassification} gives an upper bound on the probability of misclassification of each node $i$ in terms of $\gamma_1,\gamma_2,\hat{a},\hat{b},\hat{\sigma}^2,\pi_1$ and $\pi_2$.

\begin{theorem}[Unbalanced case]\label{prop:unb_misclassification}  Initialize the algorithm with a label vector $e\in\mathcal{E}_{\,\gamma_1,\gamma_2}$ and the corresponding parameter estimates $\hat{a}, \hat{b}, \hat{\sigma}^2\in S_{a,b}$.   Let 
\begin{equation}\label{eq:condition_unb}
\,t_1 = \dfrac{2\,\hat{\sigma}^2\tilde{\pi}_1\tilde{\pi}_2}{n(\hat{a}-\hat{b})}\log\left(\dfrac{\tilde{\pi}_1}{\tilde{\pi}_2}\right) + F(a,b) \, , \quad  t_2=\dfrac{\,2\hat{\sigma}^2\tilde{\pi}_1\tilde{\pi}_2}{n(\hat{a}-\hat{b})}\log\left(\dfrac{\tilde{\pi}_1}{\tilde{\pi}_2}\right) +F(b,a)\, . 
\end{equation}
If $a>b$, assume that $t_1\ge 0$ and $t_2< 0$, and  if $a < b$, assume that $t_1<0$ and $t_2\geq 0$.    Then the probability of misclassification for any node $i\in \{1,\dots,n\}$ is given by
\begin{align}
  \label{eq:unb_missclassification}
\mathbb{P}(\hat{c}_i(e)\neq 1 \,|\,c_i=1) & \leq \exp\left\lbrace -\dfrac{n}{2\sigma^2\tau^2}\,t_1^2\,\right\rbrace\,, \\
\label{eq:unb_missclassification2}
\mathbb{P}(\hat{c}_i(e)\neq 2\,|\,c_i=2) & \leq \exp\left\lbrace -\dfrac{n}{2\sigma^2\,\tau^2}\,t_2^2\,\right\rbrace\, , 
\end{align}
 where $\tau^2=\beta_1^2{\pi}_1 + \beta_2^2{\pi}_2$.  
\end{theorem}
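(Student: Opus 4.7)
My plan is to reduce the EM label update \eqref{eq:EM_estimate} to a linear decision rule in the block sums, and then apply a standard Gaussian tail bound to the resulting linear statistic. The first key observation is that the assumption $(\hat{a},\hat{b},\hat{\sigma}^2)\in S_{a,b}$ together with the homogeneous parameterization \eqref{eq:model} forces $\hat{\Sigma}_{kl}\equiv\hat{\sigma}^2$, so that $\hat{\Lambda}_{km}=n\hat{R}_{m\cdot}\hat{\Sigma}_{\cdot k}=n\hat{\sigma}^2\tilde{\pi}_m$ is independent of $k$. Consequently, when comparing the $k=1$ and $k=2$ scores in \eqref{eq:EM_estimate}, the $\log\hat{\Lambda}_{km}$ terms and the quadratic coefficients of $s_{im}^2$ all cancel, reducing the event $\{\hat{c}_i(e)=1\}$ to a linear inequality in $(s_{i1}(e),s_{i2}(e))$.

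To put this inequality in a useful form, I would expand $(s_{im}-\hat{P}_{1m})^2-(s_{im}-\hat{P}_{2m})^2=(\hat{P}_{2m}-\hat{P}_{1m})(2s_{im}-\hat{P}_{1m}-\hat{P}_{2m})$ and use $\hat{P}_{1m}-\hat{P}_{2m}=n(\hat{R}_{m1}-\hat{R}_{m2})(\hat{a}-\hat{b})$. From the definitions \eqref{eq:unb_beta} one verifies the identities $\hat{R}_{11}-\hat{R}_{12}=-\beta_1/\tilde{\pi}_2$ and $\hat{R}_{21}-\hat{R}_{22}=\beta_2/\tilde{\pi}_1$. Substituting and collecting terms, the decision rule (under $\hat{a}>\hat{b}$) becomes a test of $-\beta_1 S_{i1}(e)+\beta_2 S_{i2}(e)$ against a deterministic threshold that cleanly decomposes into a log-prior piece $\tfrac{2\hat{\sigma}^2\tilde{\pi}_1\tilde{\pi}_2}{n(\hat{a}-\hat{b})}\log(\tilde{\pi}_1/\tilde{\pi}_2)$ plus affine contributions from the $\hat{P}_{km}$ estimates, which together will produce the $t_1$ expression.

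To evaluate the misclassification probability, I would condition on the true labels $c$ with $c_i=1$. Writing $-\beta_1 S_{i1}(e)+\beta_2 S_{i2}(e)=\sum_j W_{ij}\phi_j$ with $\phi_j=-\beta_1\mathbf{1}\{e_j=1\}+\beta_2\mathbf{1}\{e_j=2\}$, and then grouping the sum by the true label $c_j$, the statistic is Gaussian with mean linear in $a,b$ (involving $\gamma_1,\gamma_2,\pi_1,\pi_2$) and variance $\sigma^2\sum_j\phi_j^2$. A direct computation shows that the gap between this conditional mean and the threshold collapses, after the natural rescaling by $\tilde{\pi}_1\tilde{\pi}_2$, to exactly the quantity $t_1$ of \eqref{eq:condition_unb}, while the variance combines to give rise to the $\sigma^2\tau^2$ factor. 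The sign assumption $t_1\ge 0$ (for $a>b$) places the mean on the safe side of the threshold, so that the standard one-sided bound $\mathbb{P}(Z<\mathbb{E}Z-s)\le\exp(-s^2/(2\mathrm{Var}\,Z))$ yields \eqref{eq:unb_missclassification}. The bound \eqref{eq:unb_missclassification2} follows by swapping the roles of the two communities, giving $t_2$ in place of $t_1$; the case $a<b$ is handled identically by flipping signs.

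The main obstacle will be the algebraic bookkeeping required to make the threshold and the conditional mean of the linear statistic collapse precisely to the form involving $t_1$, $t_2$ and $F$. The computation has to juggle the six interrelated quantities $(\gamma_1,\gamma_2,\pi_1,\pi_2,\tilde{\pi}_1,\tilde{\pi}_2)$ along with the estimates $\hat{a},\hat{b},\hat{\sigma}^2$, and the identities $\hat{R}_{11}-\hat{R}_{12}=-\beta_1/\tilde{\pi}_2$ and $\hat{R}_{21}-\hat{R}_{22}=\beta_2/\tilde{\pi}_1$ are the pivot that keeps the algebra tractable. Once these simplifications go through, the remaining analysis is a textbook Gaussian tail bound for a linear classifier applied to independent Gaussian block sums; the sign conditions on $t_1$ and $t_2$ are exactly what make the one-sided tail bound the right tool rather than a Chernoff-type two-sided argument.
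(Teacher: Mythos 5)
Your proposal is correct and follows essentially the same route as the paper's proof: both reduce the EM update to a linear test in $(S_{i1}(e),S_{i2}(e))$ after noting that $\hat{\Lambda}_{km}=n\hat{\sigma}^2\tilde{\pi}_m$ does not depend on $k$, both use the factorization $\hat{P}_{1m}-\hat{P}_{2m}=n(\hat{a}-\hat{b})(R_{m1}-R_{m2})$ with $R_{11}-R_{12}=-\beta_1/\tilde{\pi}_2$ and $R_{21}-R_{22}=\beta_2/\tilde{\pi}_1$, and both finish with a one-sided Gaussian tail bound on the statistic $\beta_1 S_{i1}(e)-\beta_2 S_{i2}(e)$ conditionally on $c_i$, with the sign conditions on $t_1,t_2$ playing exactly the role you describe. (Your variance computation $\sigma^2\sum_j\phi_j^2=n\sigma^2(\beta_1^2\tilde{\pi}_1+\beta_2^2\tilde{\pi}_2)$ is in fact the careful version of the paper's $\tau^2$.)
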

Then \eqref{eq:condition_unb} implies that the probabilities of misclassification \eqref{eq:unb_missclassification} and \eqref{eq:unb_missclassification2} go to zero as $n$ grows.

%\andressa{I followed your suggestion and I'm using now $|\hat a_n - a_n | < \delta_n$, $|\hat b_n - b_n | < \delta_n$. This assumption only provides a small change in the bound below.}
To get more intuition about the tradeoffs indicated by Theorem \ref{prop:unb_misclassification}, take a $\delta_n > 0$ such that  $|\hat a_n - a_n | < \delta_n$, $|\hat b_n - b_n | < \delta_n$ and assume the variance is known, with $\hat{\sigma}^2_n=\sigma^2_n$.   We can then rewrite the bound \eqref{eq:unb_missclassification} in Theorem \ref{prop:unb_misclassification} as
\begin{align*}
  \exp  \left\lbrace
                      -   \frac{n(a_n-b_n)^2}{2\tau^2\sigma_n^2}  \times
           \right.   
    \left.  \left(
                           \frac{2 {\sigma_n}^2\tilde{\pi}_1\tilde{\pi}_2}{n({a_n}-{b_n+2\delta_n})({a_n}-{b_n})}
                            \log\left(\frac{\tilde{\pi}_1}{\tilde{\pi}_2}
                   \right)   
                  + C_1^{\gamma,\pi}
                  - \frac{2\delta C_2^{\gamma,\pi}}{(a_n-b_n)}
                   \right)^2
       \right\rbrace \,  ,
\end{align*}
where $C_1^{\gamma,\pi} = \beta_1\pi_2 + \beta_2\pi_1 - (\beta_1+\beta_2)(\pi_1\gamma_1+\pi_2\gamma_2)$ and $C_2^{\gamma,\pi} = \beta_1\pi_2 - \beta_2\pi_1 + (\beta_1+\beta_2)(\pi_1\gamma_1-\pi_2\gamma_2)$. If
\begin{equation*}
\dfrac{\sigma^2_n}{n} \to 0 \quad\mbox{ and }\quad \dfrac{n(a_n-b_n)^2}{\sigma_n^2} \to \infty\, ,
\end{equation*}
then the probabilities of misclassification \eqref{eq:unb_missclassification} and \eqref{eq:unb_missclassification2} go to zero as $n \rightarrow \infty$.

%Using the Markov inequality for any $\epsilon > 0$ we conclude that
%\begin{equation}
%\mathbb{P}\left(\,L(\hat{c}\,(e),c) > \epsilon\,\right) \leq \dfrac{\mathbb{E}\left[\,L(\hat{c}\,(e),c)\,\right]}{\epsilon}\,.
%\end{equation}
%
%By the union bound inequality we have that
%\begin{equation}
%\mathbb{P}\left(\,\sup\limits_{e\,\in\, \mathcal{E}_{\gamma_1,\gamma_2}}L(\hat{c}\,(e),c) > \epsilon\,\right) \leq \left|\mathcal{E}_{\gamma_1,\gamma_2}\right|\dfrac{\mathbb{E}\left[\,L(\hat{c}\,(e),c)\,\right]}{\epsilon}
%\end{equation}

Combining the two bounds of Theorem \ref{prop:unb_misclassification}, we can conclude that the expected error of $\hat{c}(e)$, for any $e\in \mathcal{E}_{\gamma_1,\gamma_2}$, satisfies 
\begin{align}
  \label{eq:error_bound}
\mathbb{E}\left[L(\hat{c}(e),c) \right] \leq
  \pi_1\exp\left\lbrace -\dfrac{n}{2\sigma^2\tau^2}   t_1^2 \right\rbrace
  + \pi_2\exp\left\lbrace -\dfrac{n}{2\sigma^2 \tau^2} t_2^2 \right\rbrace
  \,.
\end{align}

In particular, when $\pi=(1/2,1/2)$ and $\gamma_1=\gamma_2=\gamma$, we have $\tilde \pi_1=\tilde \pi_2 = 1/2$, $\beta_1=\beta_2=\frac{1}{4}(1-2\gamma)$ and $-t_2=t_1=\frac{1}{4}(1-2\gamma)^2(a-b)$. Thus, the expected error given by \eqref{eq:error_bound} matches, up to a constant, the expected error \eqref{eq:upper_bound_risk_bal} obtained for the balanced case.  
  
Figure \ref{fig:condition_unb} shows the logarithm of the bound on the expected error as a function of the signal strength $|a-b|$, the difference between the within- and between-community means, and the quality of the initial parameter estimates $\delta$.  We fix  $n=100$ and the variance  $\hat \sigma^2 = 1$.   We plot the value of the log-bound as a heatmap, for the set of parameters that satisfy conditions \eqref{eq:condition_unb}.   As one would expect, the error decreases as $\delta$ decreases (better initial value for the parameters), as $|a-b|$ increases (easier problem), and as the proportion of matches $\gamma$ increases (better initial value for the labels).   We also see that a better initial value for the labels (higher $\gamma$) leads to a larger set of parameters satisfying the conditions, and to lower errors.   The effect of unbalanced communities is harder to isolate in this limited set of plots, though our empirical simulation results show that generally the errors for the unbalanced case tend to be higher than for the balanced case.  

%, but I am a little puzzled by why the errors are smaller for the same $|a-b$, $\delta$, and $\gamma$ for the more unbalanced $\pi$.  I would have expected it the other way around.   Can you please double check your figures, maybe the plots are switched?  Also, if you remove the bottom half, you will have space to add one more row, with $\pi = (0.5, 0.5)$.  }
%\andressa{I check the codes and counts, they seem to be correct. Here we are analyzing only one step of the PL algorithm in the particular case where we have a good approximation of the parameters estimates $\hat a_n, \hat b_n$, that means $|\hat a_n - a_n | < \delta_n$, $|\hat b_n - b_n | < \delta_n$. Maybe since this approximation is already good enough, it is easier to discriminate the communities in the "more" unbalanced model. That is true that the simulations show that the error is higher for the "more" unbalanced model, but in that case we don't have any guarantee about how good is the parameters estimates, ($|\hat a_n - a_n | < \delta_n$, $|\hat b_n - b_n | < \delta_n$). }

%\andressa{Comment: Since here I can not use $\gamma=1/2$ because of the balanced case, I'm using now $\gamma=0.8,0.6$ instead of $\gamma=0.7,0.5$ that I have used before. As we can see, for the balanced case, the conditions of the Theorem \ref{prop:unb_misclassification} are always satisfied and the upper bound does not depend on $\gamma$. This result matches the result that we have for the balanced case in the last section.}

\begin{figure}[H]
\centering
  \begin{subfigure}[b]{0.47\textwidth}
    \includegraphics[scale=0.35]{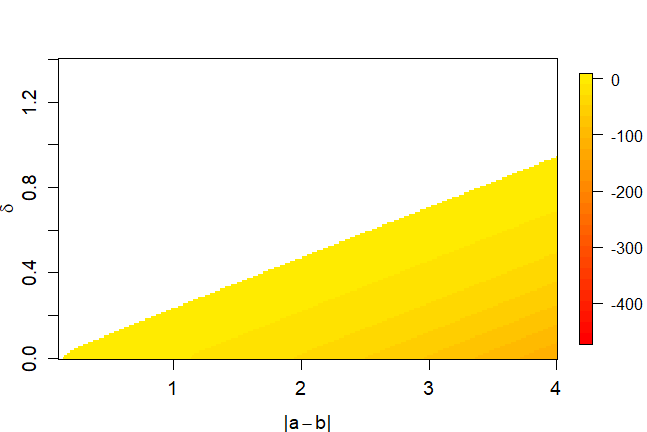}
    \caption{$\pi=(0.7,0.3)$, $\gamma_1=\gamma_2=0.6$}
    %\label{fig:1}
  \end{subfigure}
  %\hspace{1cm}
  \begin{subfigure}[b]{0.47\textwidth}
    \includegraphics[scale=0.35]{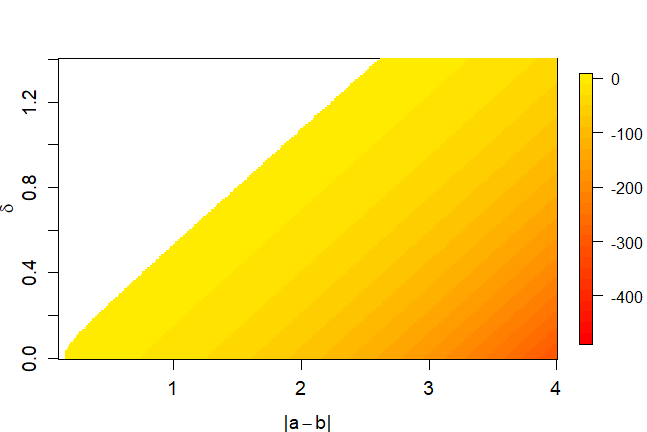}
    \caption{$\pi=(0.7,0.3)$, $\gamma_1=\gamma_2=0.8$}
    %\label{fig:2}
  \end{subfigure}
  \begin{subfigure}[b]{0.47\textwidth}
    \includegraphics[scale=0.35]{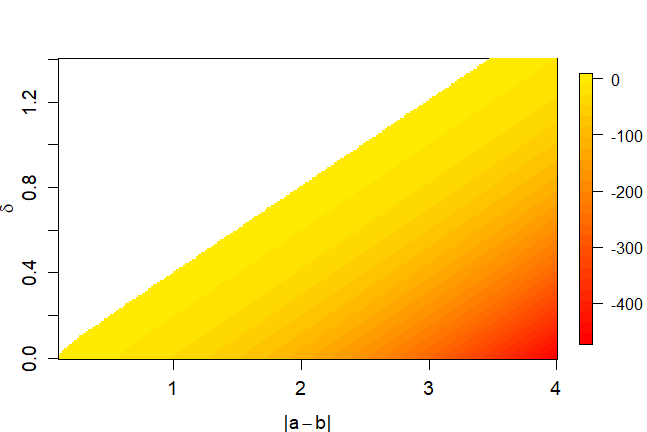}
    \caption{$\pi=(0.9,0.1)$, $\gamma_1=\gamma_2=0.6$}
    %\label{fig:1}
  \end{subfigure}
  %\hspace{1cm}
  \begin{subfigure}[b]{0.47\textwidth}
    \includegraphics[scale=0.35]{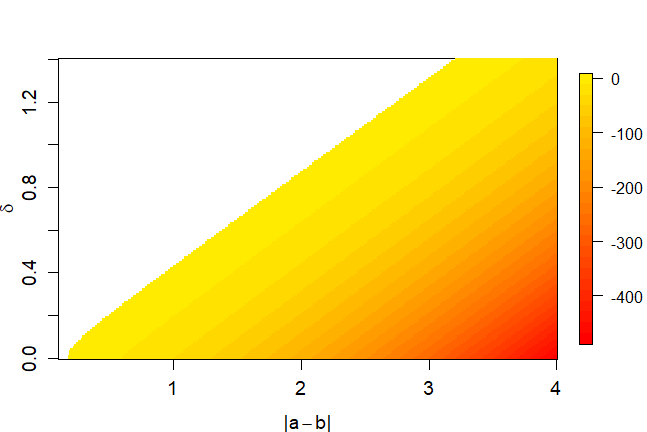}
    \caption{$\pi=(0.9,0.1)$, $\gamma_1=\gamma_2=0.8$}
    %\label{fig:2}
  \end{subfigure}
  \begin{subfigure}[b]{0.47\textwidth}
    \includegraphics[scale=0.35]{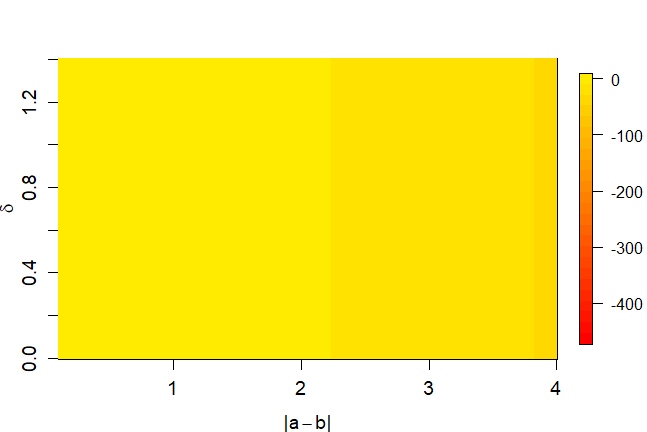}
    \caption{$\pi=(0.5,0.5)$, $\gamma_1=\gamma_2=0.6$}
    %\label{fig:1}
  \end{subfigure}
  %\hspace{1cm}
  \begin{subfigure}[b]{0.47\textwidth}
    \includegraphics[scale=0.35]{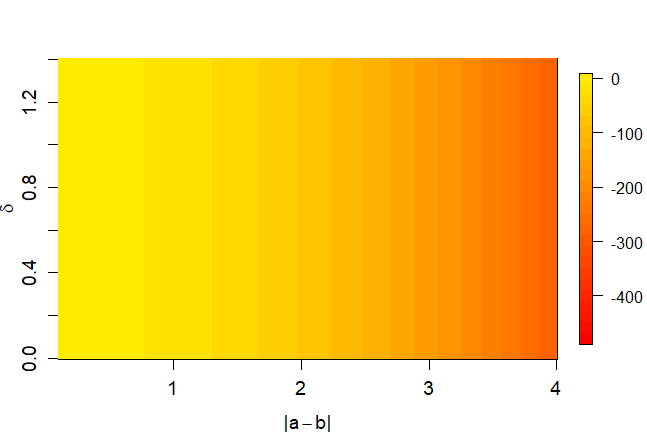}
    \caption{$\pi=(0.5,0.5)$, $\gamma_1=\gamma_2=0.8$}
    %\label{fig:2}
  \end{subfigure}
  \caption{The logarithm of the upper bound of the expected error \eqref{eq:error_bound} when $n=100$, $\hat{\sigma}^2=\sigma^2=1$,  $|\hat a_n - a_n | < \delta_n$ and $|\hat b_n - b_n | < \delta_n$  for different proportion of matches $\gamma_1$ and $\gamma_2$. }
  \label{fig:condition_unb}
\end{figure}

\section{Empirical evaluation on simulated networks}\label{sec:simulations}

Our empirical investigations focus on two goals:  understanding how the various parameters of the problem affect the performance of  the pseudo-likelihood algorithm (PL), and comparing it to other ways of estimating communities from weighted networks.   We simulate networks from the model  \eqref{eq:model} with $K=3$ and other parameters as specified below.    The number of iterations of the PL algorithm is fixed at $20$.   Performance is evaluated by the the error in community assignments  defined in \eqref{eq:def_loss},  averaged over $100$ replications.  

Figure \ref{fig:pseudo_gamma} shows the performance of the PL algorithm, as implemented in Algorithm 1, as a function of the signal strength $|a-b|$, with fixed  $\sigma^2=1$ and several values of the number of nodes $n$ and the correct fraction of initial labels $\gamma$.  The results are intuitive:  the error rate decreases as the signal gets stronger (larger  $|a-b|$), the initial value improves  (larger $\gamma$), and the number of nodes grows.    An encouraging finding is that these results are not especially sensitive to $\gamma$, which we cannot easily control in practice.     When there is little difference between the means $a$ and $b$, the error rate gets close to random guessing (2/3 in this case, as $K=3$), as one would expect.     The comparison between balanced and unbalanced community sizes is not  straightforward with $K=3$, but overall the balanced case is easier, as one would expect.  The higher errors for the balanced case when the signal is very low are an artifact of the fact that putting all nodes into the single largest community gives the error rate of 66\% for the balanced case but only 50\% for the unbalanced case.   

\begin{figure}[!h]
\centering
  \begin{subfigure}[b]{0.48\textwidth}
    \includegraphics[scale=0.45]{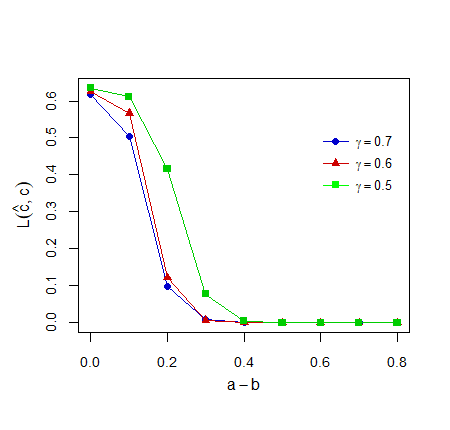}
    \caption{$n=500$ and $\pi=(1/3,1/3,1/3)$.}
    \label{fig:pseudo_gamma_a}
  \end{subfigure}
  %\hspace{1cm}
  \begin{subfigure}[b]{0.48\textwidth}
    \includegraphics[scale=0.45]{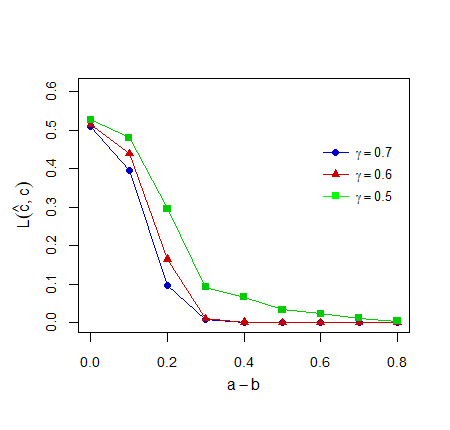}
    \caption{$n=500$ and $\pi=(0.2,0.5,0.3)$.}
    \label{fig:pseudo_gamma_b}
  \end{subfigure}
    %\hspace{1cm}
  \begin{subfigure}[b]{0.48\textwidth}
    \includegraphics[scale=0.45]{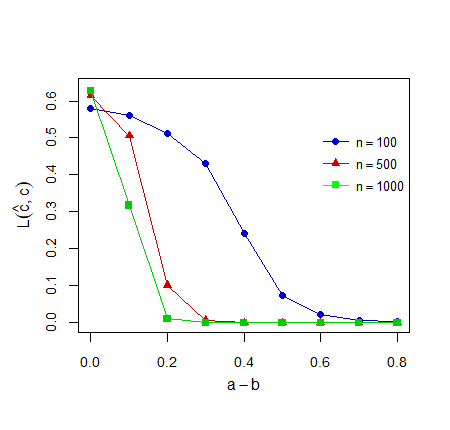}
    \caption{$\gamma=0.7$ and $\pi=(1/3,1/3,1/3)$.}
   \label{fig:pseudo_gamma_c}
  \end{subfigure}
  %\hspace{1cm}
  \begin{subfigure}[b]{0.48\textwidth}
    \includegraphics[scale=0.45]{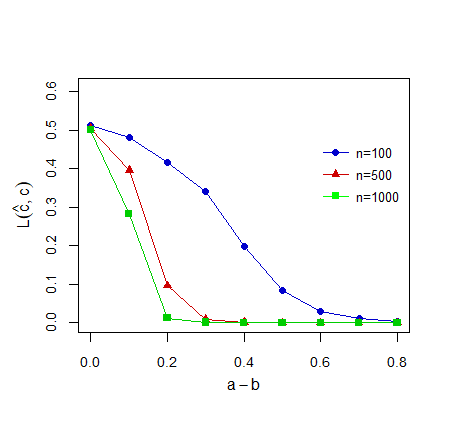}
    \caption{$\gamma=0.7$ and $\pi=(0.2,0.5,0.3)$.}
    \label{fig:pseudo_gamma_d}
  \end{subfigure}
  \caption{Error of the PL algorithm as a function of the difference between the means $|a-b|$, for several values of the correct proportion $\gamma$ and the number of nodes $n$, averaged over 100 replications.  The variance is fixed at $\sigma^2=1$.}
  \label{fig:pseudo_gamma}
\end{figure}

We compare two choices of the initial labels for the PL algorithm.  One is spectral clustering  (SC),  implemented as described in \cite{lei2015consistency} for unweighted networks, applied directly to the matrix of weights $W$.  The second option is the discretization-based algorithm (DB) of \cite{xu2017optimal}, which first discretizes the matrix of weights and then applies clustering.  The DB method itself depends on the choice of the discretization level; we followed the DB authors' recommendation and set it to $\lfloor 0.4(\log \log n )^4 \rfloor$.   

Figure \ref{fig:pseudo_initial_SC_DB} shows that the PL algorithm improves substantially upon both initial values.   The SC algorithm is generally more accurate than DB, and thus leads to better PL solutions when used as the initial value.   For comparison, the PL algorithm started with $\gamma = 0.7$ correct labels is included in all scenarios.    Spectral clustering is known to favor balanced solutions, and thus for unbalanced networks with a small community containing only $10\%$ of the nodes (Figure \ref{subfig:unb2}), the SC estimates are the worst among all the methods, but even then the PL method is able to improve somewhat on the initial value provided by spectral clustering.  

\begin{figure}[!htb]
\centering
 % \begin{subfigure}[b]{0.48\textwidth}
  %\centering
   % \includegraphics[scale=0.45]{}
   % \caption{$\pi=(1/3,1/3,1/3)$ and $n=500$}
  %\end{subfigure}
  %\hspace{1cm}
  \begin{subfigure}[b]{0.48\textwidth}
  \centering
    \includegraphics[scale=0.45]{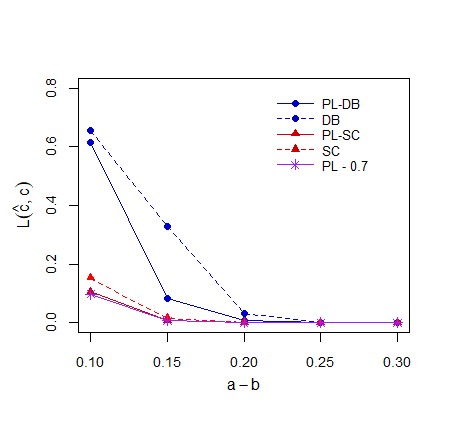}
    \caption{$\pi=(1/3,1/3,1/3)$}
  \end{subfigure}
  %\\ \vspace{-0.1cm}
  %\begin{subfigure}[b]{0.48\textwidth}
  %\centering
    %\includegraphics[scale=0.45]{}
    %\caption{$\pi=(0.2,0.5,0.3)$ and $n=500$}
  %\end{subfigure}
  \begin{subfigure}[b]{0.48\textwidth}
  \centering
    \includegraphics[scale=0.45]{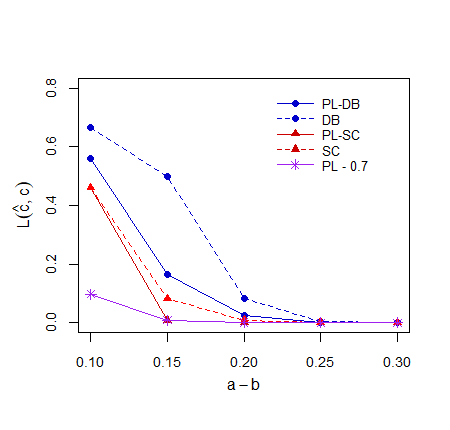}
    \caption{$\pi=(0.2,0.5,0.3)$}
  \end{subfigure}
  % \\ \vspace{-0.1cm}
  \caption{Balanced vs.\ unbalanced community sizes.   Overall error for PL initialized with $\gamma=0.7$, SC, DB, and PL initialized with either SC (PL-SC) or DB (PL-DB).  For both settings, $\sigma^2=0.5$, $n = 1000$, and results are averaged over 100 replications.} %\andressa{ I'm considering to remove figures a and c, what do you think?}  \liza{Agreed.} }
  \label{fig:pseudo_initial_SC_DB}
\end{figure}

\begin{figure}[!htb]
\centering
  \begin{subfigure}[b]{0.48\textwidth}
 \centering
     \includegraphics[scale=0.5]{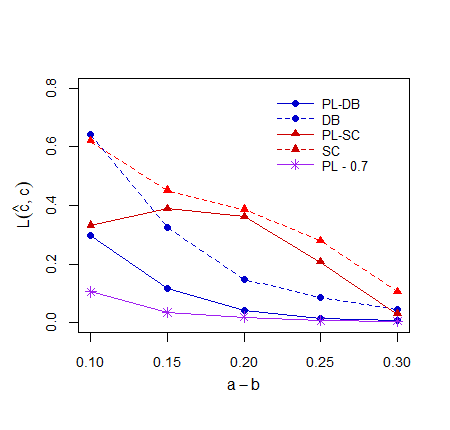}
   \caption{\centering Overall error.}
   \label{subfig:unb2}
  \end{subfigure}
  \begin{subfigure}[b]{0.48\textwidth}
  \centering
  \includegraphics[scale=0.5]{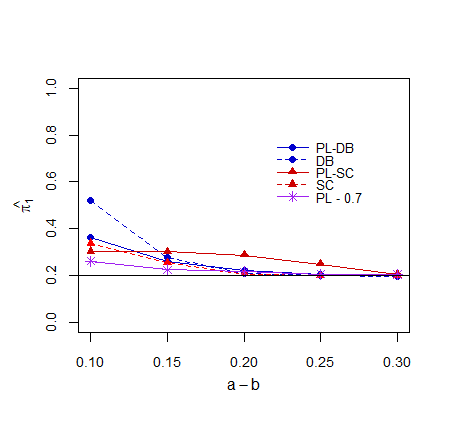}
   \subcaption{\centering Estimated $\hat \pi_1$.}
  \end{subfigure} \\
  \begin{subfigure}[b]{0.48\textwidth}
    \includegraphics[scale=0.5]{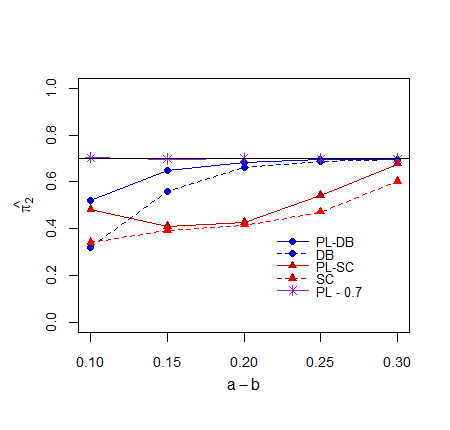}
       \subcaption{\centering Estimated  $\hat \pi_2$. }
      \end{subfigure}
        \begin{subfigure}[b]{0.48\textwidth}
          \includegraphics[scale=0.5]{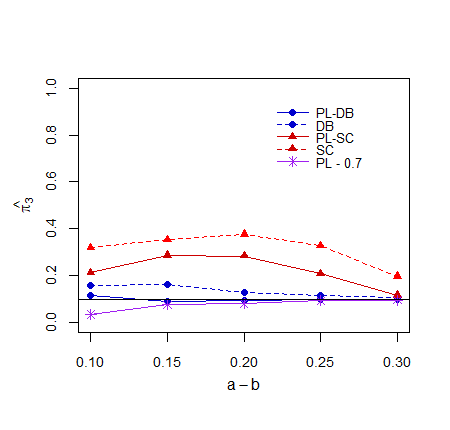}
             \subcaption{\centering Estimated $\hat \pi_3$. }
  \label{fig:pi_hat_subfig}
  \end{subfigure}
    \caption{Unbalanced case, $\pi = (0.2, 0.7, 0.1)$.   Overall error and estimated proportion of nodes   for PL initialized with $\gamma=0.7$, SC, DB, and PL initialized with either SC (PL-SC) or DB (PL-DB). For every setting, $\sigma^2=0.5$, $n=1000$, and results are averaged over 100 replications.}
  \label{fig:pi_hat}
\end{figure}

%\begin{figure}[!ht]
%\centering
%  \begin{subfigure}[b]{1\textwidth}
%  \centering
%    \includegraphics[scale=0.4]{}
%	\hfill
%    \includegraphics[scale=0.4]{}
%  \hfill
%    \includegraphics[scale=0.4]{}
%  \subcaption{$\pi=(1/3,1/3,1/3)$.}
%  \end{subfigure}
%  \begin{subfigure}[b]{1\textwidth}
%  \centering
%    \includegraphics[scale=0.4]{}
%	\hfill
%    \includegraphics[scale=0.4]{}
%  \hfill
%    \includegraphics[scale=0.4]{}
%  \subcaption{$\pi=(0.2,0.5,0.3)$.}
%  \end{subfigure}
%  \begin{subfigure}[b]{1\textwidth}
%  \centering
%    \includegraphics[scale=0.4]{pi_hat_unbalanced2_block1}
%	\hfill
%    \includegraphics[scale=0.4]{pi_hat_unbalanced2_block2}
%  \hfill
%    \includegraphics[scale=0.4]{pi_hat_unbalanced2_block3}
%  \subcaption{$\pi=(0.2,0.7,0.1)$.}
%  \end{subfigure}
%    \caption{Estimated proportion of nodes in each block using the methods: pseudo-likelihood (PL) using $\gamma=0.7$, spectral clustering (SC), discretization based method (DB), pseudo-likelihood using as initial labeling the SC labels estimates (PL-SC) and the DB labels estimates  (PL-DB). For every parameter setting, $100$ networks were sampled from the model with $\sigma^2=0.5$ and $n=1000$.}
%  \label{fig:pi_hat}
%\end{figure}

To investigate robustness to the Gaussian assumption, we also considered the case of heavier-tailed edge weights.   We generated the weights from a mixture of Gaussian and a noncentral $t$ distribution $t_{\mu,d}$ with $d$ degrees of freedom and the noncentrality parameter $\mu$.   The within-community and between-community edge weights are generated by the mixture $\alpha \mathscr{N}(0.2,0.25) + (1-\alpha)t_{0.2,4}$ and $\alpha \mathscr{N}(0,0.25) + (1-\alpha)t_{0,4}$, respectively.  Figure \ref{subfig:gaussian_t} illustrates the within and between densities for $\alpha=0.4$.   Figure \ref{subfig:gaussian_t} shows that the PL method performs well even for small values of $\alpha$, when the edge weights are heavy-tailed. 

Figure \ref{subfig:gaussians} illustrates a different violation of the distributional assumption, with within-community edge weights generated from a bimodal mixture of Gaussians $0.5\mathscr{N}(-0.3,0.25) + 0.5\mathscr{N}(b,0.25)$, and the between-community edge weights are $\mathscr{N}(0,0.25)$.   We vary $b$ from $0.3$ to $0.6$, with the mean of within-community edge weights varying from $0$ to $0.15$, while the between-community mean is always $0$.   As expected, the overall error of the PL estimates decreases as the difference between the within and between means increases.  The DB algorithm has an advantage in this case because the discretization step helps overcome this departure from normality, which was also pointed out by the authors.   % \liza{Can you please double-check this figure?  It is the only case where PL-DB is a lot worse than DB. \andressa{I ran it again and checked the code and it seems right}  I understand that it might not improve, but I am a bit puzzled by how it gets so much worse. \andressa{It is worse because the absolute value of difference between the mean of the weights within and between communities are small ($|b-0.3)/2|$), and PL can't detect this difference. However, DB is a non-parametric method, so it could detected the bimodal distribution of the within communities. Actually, the case of mixture of gaussians is exactly the simulation that they did in their paper to show that the method works}     I also don't understand how $b$ was set -- were you choosing a new $b$ for every replication?   Uniformly in the range 0.3 to 0.6? \andressa{for each simulation it is fixed, but I ran 100 replications for each choice of fixed b}   Then what is $(b-0.3)/2$ the mean of? \andressa{the mixture of gaussians}   In any case, I think you can omit the density plot here too and combine 5(a) and 6(a) into one figure.} \andressa{I rewrite the sentence trying to make all you points clear}

\begin{figure}[htb!]
\centering
  \begin{subfigure}[b]{0.48\textwidth}
 \centering
     \includegraphics[scale=0.45]{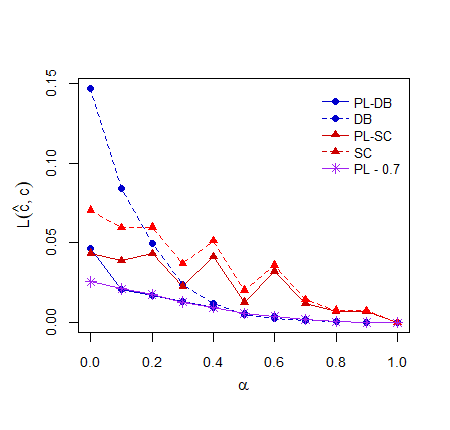}
     \subcaption{Heavy tails}
     \label{subfig:gaussian_t}
  \end{subfigure}
  \begin{subfigure}[b]{0.48\textwidth}
  \centering
  \includegraphics[scale=0.45]{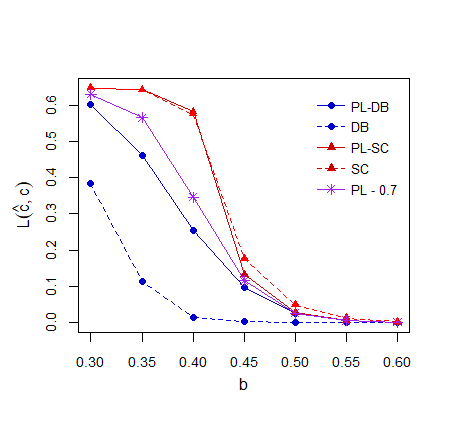}
   \subcaption{Bimodal}
    \label{subfig:gaussians}
  \end{subfigure}
    \caption{Overall error for PL initialized with $\gamma=0.7$, SC, DB, and PL initialized with either SC (PL-SC) or DB (PL-DB). (a)  Edge weights are generated from a mixture of Gaussian and a noncentral $t$-distribution with mixture probability $\alpha$ and $n=1000$. (b) Within-community edge weights are generated from a mixture of Gaussians and between-community weights from a Gaussian distribution. }
  \label{fig:gaussian_t_mixture}
\end{figure}

Finally, we compare the running times of different methods in Figure \ref{fig:time_alg}, on the same standard single core.   The time reported for the PL method does not include the time needed to generate the initial labeling. As $n$ grows,  the PL algorithm becomes cheaper to compute than SC and DB themselves, suggesting it is an effective tool for improving their performance, increasing statistical accuracy at little computational cost.  

\begin{figure}[htb!]
\centering
\begin{subfigure}[b]{0.48\textwidth}
  \centering
    \includegraphics[scale=0.5]{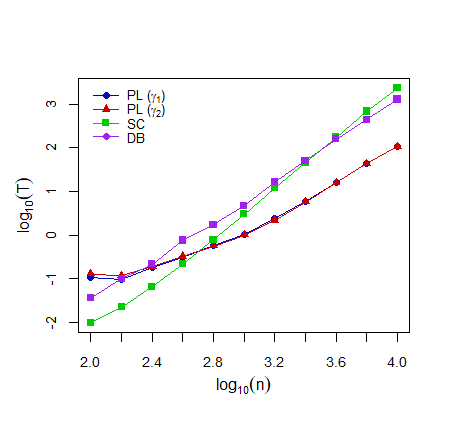}
    \caption{$(a-b)/\sigma=2$; }
    \label{fig:time_alg}
  \end{subfigure}
  %\hspace{1cm}
  \begin{subfigure}[b]{0.48\textwidth}
  \centering
    \includegraphics[scale=0.5]{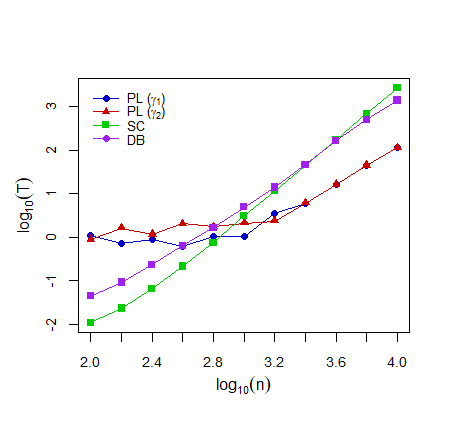}
    \caption{$(a-b)/\sigma=0.2$}
    \label{fig:time_alg_2}
  \end{subfigure}
  \caption{The running time $T$ (in seconds) of the algorithms PL ($\gamma_1=0.9$ and $\gamma_2=0.5$), SC and DB as a function of the number of nodes $n$. }
\end{figure}
\section{Application to fMRI data}\label{sec:data_analysis}

Here we apply the pseudo-likelihood method to the COBRE data \citep{aine2017multimodal}, consisting of resting state fMRI brain images of 54 schizophrenic patients and 69 healthy patients. Each fMRI scan was processed following a standard pipeline at the time of data collection, and converted to a weighted graph with $n=264$ nodes representing the regions of the brain;  see \cite{relion2019network} for details. The edge weights represent functional connectivity between the brain regions, measured by Fisher-transformed correlations between the time series of blood oxygenation levels at the corresponding regions.   Since the Fisher transform of the correlation coefficient is designed to make the distribution approximately normal, this is a natural application for the normally distributed edge weights model.    We average the 69 weighted networks corresponding to healthy patients using  the weighted network average method of \cite{Levin2022recovering} to obtain an estimate for the healthy population, and similarly for the schizophrenic patients, resulting in two ``prototypical'' weighted networks.    

There are no ground truth communities in this problem, but we can still compare the healthy and the schizophrenic populations. We can also compare results from community detection to previously published known brain atlases such as  \cite{power2011functional}.   The true number of communities $K$ is also unknown, and we simply vary $K$ from 2 to 20, a range based on previous findings for fMRI connectivity networks.    As before, we use both DB and SC as potential initial values for our pseudo-likelihood method.  Using the same formula as in the simulation study, the discretization level for DB method is set to 10. 

We start from comparing fitted likelihoods of different starting values and the PL solutions initialized with them, shown in Figure \ref{fig:likelihood_data}.   While we plot these as a function of the number of community $K$ for convenience, the values across different $K$s are not directly comparable, since we are not applying any penalization for model complexity.   The plots generally agree with what we saw in simulations: the PL algorithm finds a better fit than the initial value it starts from, and the DB method especially does not fit the data as well. 

\begin{figure}[htb!]
\centering
  \begin{subfigure}[b]{1\textwidth}
    \includegraphics[scale=0.45]{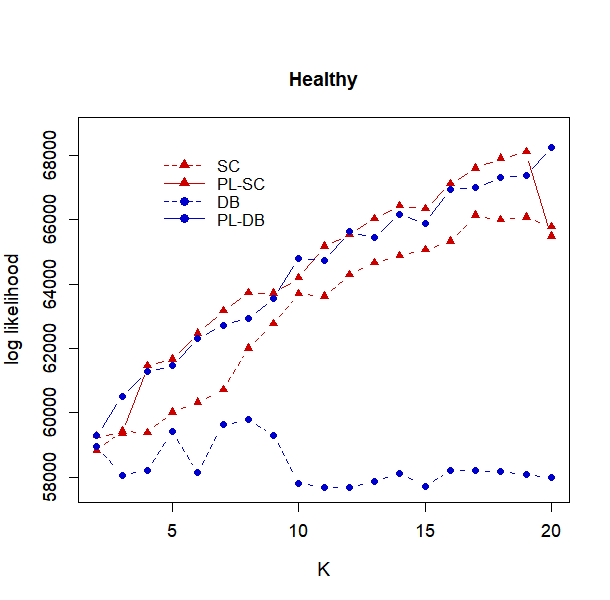}
     \includegraphics[scale=0.45]{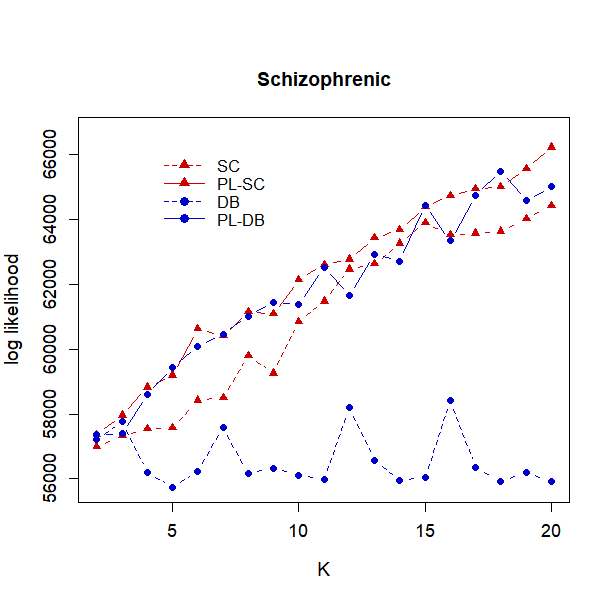}
  \end{subfigure}
  
  \caption{The log of the complete likelihood using the communities from spectral clustering (SC), discretization-based method (DB), and pseudo-likelihood with two initial values, PL-SC and PL-DB.}
  \label{fig:likelihood_data}
\end{figure}

\begin{figure}[htb!]
\centering
    \includegraphics[scale=0.45]{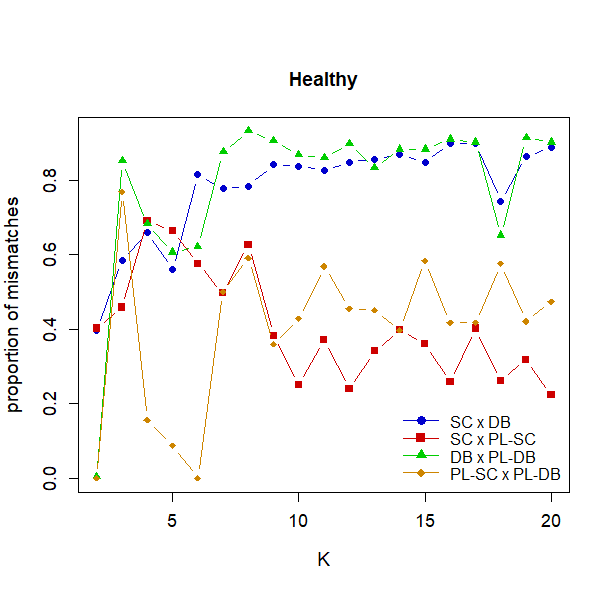}
     \includegraphics[scale=0.45]{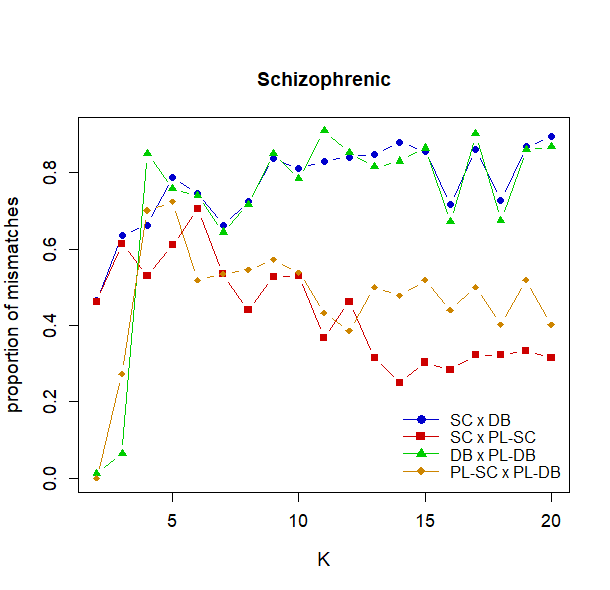}
     \caption{Proportion of mismatched nodes between different pairs of methods as a function of the number of communities $K$.} 
     \label{fig:pseudo_data_a}
\end{figure}

We also looked at how different solutions differ from each other and from their initial values.   Figure \ref{fig:pseudo_data_a} shows the proportion of nodes labeled differently (after finding the best permutation of community labels) between pairs of methods.  We see similar patterns for the healthy and the schizophrenic populations, and while the two initial values are substantially different, the solutions of the PL method are closer, suggesting that it moves in the direction of a higher likelihood from both initial values, giving us more confidence in the PL solutions. 

We also compared our solutions to the Power parcellation \citep{power2011functional}, an assignment of the same $264$ ROIs (nodes)  to 14 functional brain systems (regions), one of which is labeled ``uncertain" and others correspond to known brain functions, as described in Table \ref{tab:parcellation}.     This parcellation was obtained from a different dataset which included only healthy subjects.  

\begin{table}[h]
\caption{Brain regions in the Power parcellation.}
\label{tab:parcellation}
\centering
\begin{tabular}{cp{3.7cm}c|cp{3.3cm}c}
\hline 
Region & Function & Nodes &  Region & Function & Nodes  \\
\hline
P1 & Sensory/somatomotor Hand & 30 & P8 & Fronto-pariental Task Control & 25\\ 
P2 & Sensory/somatomotor Mouth & 5 & P9 & Salience & 18 \\  
P3 & Cingulo-opercular Task Control & 14 & P10 & Subcortical & 13 \\  
P4 & Auditory & 13 & P11 & Ventral attention &9 \\ 
P5 & Default mode & 58 & P12 & Dorsal attention & 11 \\ 
P6 & Memory retrieval & 5 & P13 & Cerebellar & 4 \\ 
P7 & Visual & 31 & P14 & Uncertain & 28 \\ 
\hline 
\end{tabular} 
\end{table}

 Fixing $K=14$ to match the Power parcellation, we estimated 14 communities for both populations by the PL algorithm using SC as the initial value. The estimated communities for the healthy population were relabeled to match their numbers as closely as possible to the Power regions using the Hungarian algorithm \citep{kuhn1955hungarian}.  Table \ref{tab:prop_hps} compares the parcellation estimated from the healthy population (H1-H14) to both the Power regions (P1-P14) and the parcellation estimated for the schizophrenic population (S1-S14), listing the region(s) with the highest overlap for each of the H1-H14 and the proportion of shared nodes.  The Sankey diagram in Figure \ref{fig:sankey_plot} shows the correspondence between the healthy and Power parcellation for reference, and the correspondence between the healthy and the schizophrenic populations.    We see that only a few regions are strongly different between the healthy and the schizophrenic parcellations; in particular, H2 and H11, which both appear to split off from P5, which is the default mode network (DMN).  The DMN has been implicated in schizophrenia previously \citep{brody2009defaulmode,ongur2010defaultmode}, and we have observed in previous work \cite{kim2019graph} that different parts of the DMN seem to play different roles, and so the splitting into multiple parts makes sense.    We also note that the most stable Power regions, which were obtained from different patients, seem to correspond to  communities correspond to the Power regions are the two attention regions, visual,  and fronto-pariental task control, which could indicate these are the strongest communities.  This is, of course, exploratory analysis, and making formal inferences requires further study.

%\begin{table}[ht]
%\caption{Communities obtained for the health network from PL algorithm using as initial labeling SC and the number of nodes within each group. The proportion of common nodes of the Health regions with the correspondent Power parcellation regions. }
%\label{tab:Power_health}
%\centering
%\begin{tabular}{cccc}
%  \hline
%  Region & Nodes& Region & Proportion \\ 
%  \hline
%H1 & 16 & P12 & 0.69 \\ 
%  H3 & 16 & P5 & 0.62 \\ 
%  H4 & 17& P5 & 0.65 \\ 
%  H7 & 14 & P5 & 1.00 \\ 
%  H8 & 18 & P5 & 0.61 \\ 
%  H9 & 26 & P7 & 0.69 \\ 
%  H10 & 17 & P1 & 0.71 \\ 
%  H12 & 26 & P8 & 0.62 \\ 
%  H13 & 8 & P14 & 1.00 \\ 
%   \hline
%\end{tabular}
%\end{table}
%
%\begin{table}[ht]
%\caption{Communities obtained for the schizophrenic network from PL algorithm using as initial labeling SC and the number of nodes within each group. The proportion of common nodes of the schizophrenic regions with the correspondent healthy communities. }
%\label{tab:healthy_schiz}
%\centering
%\begin{tabular}{cccc}
%  \hline
%  Region & Nodes& Region & Proportion \\ 
%  \hline
%S1 & 29 & H12& 0.90 \\ 
%  S2 & 35 & H5 & 0.71 \\ 
%  S3 & 9 & H4 & 0.56 \\ 
%  S4 & 26 & H1 & 0.62 \\ 
%  S5 & 23 & H11 & 0.78 \\ 
%  S7 & 19 & H4 & 0.53 \\ 
%  S8 & 25& H14 & 1.00 \\ 
%  S10 & 10 & H13 & 0.80 \\ 
%  S11 & 14& H6 & 0.71 \\ 
%  S12 & 10& H10 & 1.00 \\ 
%  S13 & 26 & H9 & 0.62 \\ 
%  S14 & 7 & H3 & 1.00 \\ 
%   \hline
%\end{tabular}
%\end{table}

\begin{table}[ht]
\caption{Proportion of common nodes of the estimated regions of the healthy network (H1-H14) with the correspondent Power parcellation regions (P1-P14) and the estimated regions of the schizophrenic network (S1-S14).}
\label{tab:prop_hps}
\centering
\begin{tabular}{ccc||ccc}
  \hline
  Region & Region & Region & Region & Region & Region \\ 
  \hline
H1 & P1 (0.70) & S1 (0.59) & H8 & P7 (0.48) & S8 (1.00) \\
H2 & P5 (0.61) & S5 (0.44) & H9 & P8 (0.61)  & S9 (1.00)   \\
H3 & P1,P3 (0.40) & S11 (0.40) & H10 & P10 (0.35) &  S10 (0.67)  \\
H4 & P3,P4,P5 (0.26) & S4 (0.95) &  H11 &P5 (0.65) & S11 (0.59)  \\
H5 & P5 (1.00) & S5 (0.85) & H12 & P12 (0.69) & S12 (1.00)  \\
H6 & P5 (0.62) & S6 (0.44) &  H13 & P5 (0.33) & S13 (0.67)  \\
H7 & P7 (0.69) & S7 (0.61)  &  H14 & P14 (1.00) & S14 (1.00) \\
   \hline
\end{tabular}
\end{table}

\begin{figure}[H]
\centering
    \includegraphics[scale=0.85]{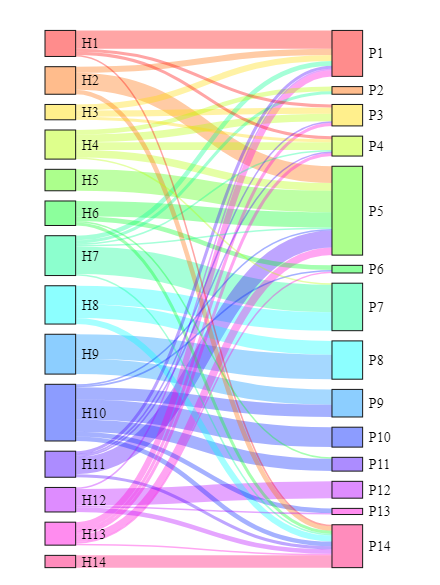}
     \includegraphics[scale=0.85]{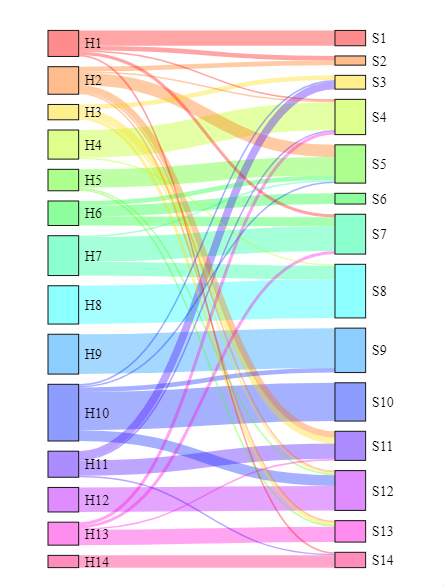}  
  \caption{A Sankey diagram comparing the communities obtained by the PL-SC algorithm for the healthy population (H1-H14) with the Power regions (P1-P14) and the communities estimated for the schizophrenic patients (S1-S14).}
  \label{fig:sankey_plot}
\end{figure}

%\begin{figure}[H]
%\centering
%    \includegraphics[scale=0.85]{}
%     \includegraphics[scale=0.85]{}  
%  \caption{ A Sankey diagram comparing the communities obtained by the PL-SC algorithm for the healthy population (H1-H14) with the Power regions (P1-P14) and the communities estimated for the schizophrenic patients (S1-S14).}
%  \label{fig:sankey_plot}
%\end{figure}

%\liza{ It would again be useful to report proportions of overlap between the two.  Another and maybe a better way here is to plot a sankey diagram (networkD3 package can do it, or igraph has many options for nice displays, check out this tutorial: https://kateto.net/sunbelt2019.    At least I would try something other than these, and see if it's too messy to display or not.  Another thing you can do to improve interpretation is to first try to identify the communities in the healthy population with Power regions.  I know not all of them can be, but if, say, a given found community has more than 50\% nodes in common with a Power region, you can identify it with that Power region.  Judging from 9a, it seems like there would be a few of those.    In the next paragraph there are totals of nodes from Power regions but that's not quite the same.}   \andressa{ I did it! I think it is better now :) }

\section{Discussion}\label{sec:discussion}

In this paper, we proposed a pseudo-likelihood method for community detection on weighted networks, a much less studied type of networks than binary but just as common in practice, or even more common, once we account for the fact that many binary networks are obtained by thresholding weighted networks obtained directly from the data.   Like with many other community detection methods, we provided the analysis under the weighted stochastic block model, but empirically the algorithm appears reasonably robust to model misspecification.  Our theoretical analysis shows that the proposed method achieves the optimal error rate up to a constant, and illustrates the trade-offs between the means, variances, and community sizes.  Like all iterative algorithms, our algorithm depends on how good the initial value is.  We were able to quantify this dependence in theory, and showed empirically that the algorithm is robust to the choice of initial value.  In particular, the initial value provided by spectral clustering is a fast and reliable way to initialize pseudo-likelihood, which can still improve upon it substantially. 

There are many directions in which this work can be taken further.  We only considered theoretical analysis for homogenous models, where all within-communities edge distributions are the same, and between-communities edge distributions are also the same.   This simplification allows for bounds that make the within- and between- tradeoff explicit, which is arguably the main value of obtaining such bounds.   The algorithm itself, however, is equally applicable to heterogeneous models, and theoretical properties under that scenario remain to be investigated.   Another useful extension would  be to incorporate edge distributions with a point mass at zero, so that the network can be truly sparse.   More generally, relaxing parametric assumptions on edge distributions and developing more general versions of the pseudo-likelihood approach to community detection on weighted networks would increase the range of applications where the method could be applied not just based on empirical evidence, but with provable guarantees.  

\section{Acknowledgments}

The authors thank Min Xu, Varun Jog and Po-Ling Loh for sharing their code with us, and Keith Levin for sharing the average networks from the COBRE data.    This research was supported  by the S\~ao Paulo Research Foundation (FAPESP) grant 2018/18115-2  to A.C. and by the NSF grants DMS-1916222, 2052918, and 2210439 to E.L.  This research was conducted while A.C. was a Visiting Postdoctoral Researcher in the Department of Statistics at the University of Michigan, and she thanks the department for its hospitality and support.

\bibliographystyle{apa}
\bibliography{references}

\begin{thebibliography}{}

\bibitem[\protect\astroncite{Abbe}{2018}]{abbe2017community}
Abbe, E. (2018).
\newblock Community detection and stochastic block models: recent developments.
\newblock {\em Journal of Machine Learning Research}, 18(177):1--86.

\bibitem[\protect\astroncite{Aicher et~al.}{2014}]{aicher2014learning}
Aicher, C., Jacobs, A.~Z., and Clauset, A. (2014).
\newblock Learning latent block structure in weighted networks.
\newblock {\em Journal of Complex Networks}, 3(2):221--248.

\bibitem[\protect\astroncite{Aine et~al.}{2017}]{aine2017multimodal}
Aine, C., Bockholt, H.~J., Bustillo, J.~R., Ca{\~n}ive, J.~M., Caprihan, A.,
  Gasparovic, C., Hanlon, F.~M., Houck, J.~M., Jung, R.~E., Lauriello, J.,
  et~al. (2017).
\newblock Multimodal neuroimaging in schizophrenia: description and
  dissemination.
\newblock {\em Neuroinformatics}, 15(4):343--364.

\bibitem[\protect\astroncite{Airoldi et~al.}{2008}]{airoldi2008mixed}
Airoldi, E.~M., Blei, D.~M., Fienberg, S.~E., and Xing, E.~P. (2008).
\newblock Mixed membership stochastic blockmodels.
\newblock {\em Journal of Machine Learning Research}, 9(65):1981--2014.

\bibitem[\protect\astroncite{Amini et~al.}{2013}]{amini2013pseudo}
Amini, A.~A., Chen, A., Bickel, P.~J., and Levina, E. (2013).
\newblock Pseudo-likelihood methods for community detection in large sparse
  networks.
\newblock {\em The Annals of Statistics}, 41(4):2097--2122.

\bibitem[\protect\astroncite{Arroyo et~al.}{2021}]{Arroyo2021inference}
Arroyo, J., Athreya, A., Cape, J., Chen, G., Priebe, C.~E., and Vogelstein,
  J.~T. (2021).
\newblock Inference for multiple heterogeneous networks with a common invariant
  subspace.
\newblock {\em Journal of Machine Learning Research}, 22(142):1--49.

\bibitem[\protect\astroncite{Arroyo et~al.}{2019}]{relion2019network}
Arroyo, J., Kessler, D., Levina, E., and Taylor, S.~F. (2019).
\newblock Network classification with applications to brain connectomics.
\newblock {\em The Annals of Applied Statistics}, 13(3):1648--1677.

\bibitem[\protect\astroncite{Bhattacharyya and
  Chatterjee}{2018}]{bhattacharyya2018spectral}
Bhattacharyya, S. and Chatterjee, S. (2018).
\newblock Spectral clustering for multiple sparse networks: I.
\newblock {\em arXiv preprint arXiv:1805.10594}.

\bibitem[\protect\astroncite{Boucheron
  et~al.}{2013}]{boucheron2013concentration}
Boucheron, S., Lugosi, G., and Massart, P. (2013).
\newblock {\em Concentration inequalities: A nonasymptotic theory of
  independence}.
\newblock Oxford university press.

\bibitem[\protect\astroncite{Broyd et~al.}{2009}]{brody2009defaulmode}
Broyd, S.~J., Demanuele, C., Debener, S., Helps, S.~K., James, C.~J., and
  Sonuga-Barke, E.~J. (2009).
\newblock Default-mode brain dysfunction in mental disorders: A systematic
  review.
\newblock {\em Neuroscience and Biobehavioral Reviews}, 33(3):279--296.

\bibitem[\protect\astroncite{Chen and Xu}{2016}]{chen2016statistical}
Chen, Y. and Xu, J. (2016).
\newblock Statistical-computational tradeoffs in planted problems and submatrix
  localization with a growing number of clusters and submatrices.
\newblock {\em Journal of Machine Learning Research}, 17(27):1--57.

\bibitem[\protect\astroncite{Erd{\H{o}}s and
  R{\'e}nyi}{1960}]{erdHos1960evolution}
Erd{\H{o}}s, P. and R{\'e}nyi, A. (1960).
\newblock On the evolution of random graphs.
\newblock {\em Publ. Math. Inst. Hung. Acad. Sci}, 5(1):17--60.

\bibitem[\protect\astroncite{Hajek et~al.}{2018}]{hajek2017submatrix}
Hajek, B., Wu, Y., and Xu, J. (2018).
\newblock Submatrix localization via message passing.
\newblock {\em Journal of Machine Learning Research}, 18(186):1--52.

\bibitem[\protect\astroncite{Heimlicher et~al.}{2012}]{heimlicher2012community}
Heimlicher, S., Lelarge, M., and Massouli{\'e}, L. (2012).
\newblock Community detection in the labelled stochastic block model.
\newblock {\em arXiv preprint arXiv:1209.2910}.

\bibitem[\protect\astroncite{Holland et~al.}{1983}]{holland1983stochastic}
Holland, P.~W., Laskey, K.~B., and Leinhardt, S. (1983).
\newblock Stochastic blockmodels: First steps.
\newblock {\em Social networks}, 5(2):109--137.

\bibitem[\protect\astroncite{Karrer and Newman}{2011}]{karrer2011stochastic}
Karrer, B. and Newman, M.~E. (2011).
\newblock Stochastic blockmodels and community structure in networks.
\newblock {\em Phys. Rev. E}, 83(1):016107.

\bibitem[\protect\astroncite{Kim et~al.}{2019}]{kim2019graph}
Kim, Y., Kessler, D., and Levina, E. (2019).
\newblock Graph-aware modeling of brain connectivity networks.
\newblock {\em arXiv preprint arXiv:1903.02129}.

\bibitem[\protect\astroncite{Kuhn}{1955}]{kuhn1955hungarian}
Kuhn, H.~W. (1955).
\newblock The hungarian method for the assignment problem.
\newblock {\em Naval research logistics quarterly}, 2(1-2):83--97.

\bibitem[\protect\astroncite{Latouche et~al.}{2011}]{latouche2011overlapping}
Latouche, P., Birmel{\'e}, E., and Ambroise, C. (2011).
\newblock Overlapping stochastic block models with application to the french
  political blogosphere.
\newblock {\em The Annals of Applied Statistics}, 5(1):309--336.

\bibitem[\protect\astroncite{Le et~al.}{2018}]{le2018estimating}
Le, C.~M., Levin, K., and Levina, E. (2018).
\newblock Estimating a network from multiple noisy realizations.
\newblock {\em Electronic Journal of Statistics}, 12(2):4697--4740.

\bibitem[\protect\astroncite{Lei and Rinaldo}{2015}]{lei2015consistency}
Lei, J. and Rinaldo, A. (2015).
\newblock Consistency of spectral clustering in stochastic block models.
\newblock {\em The Annals of Statistics}, 43(1):215--237.

\bibitem[\protect\astroncite{Lelarge et~al.}{2015}]{lelarge2015reconstruction}
Lelarge, M., Massouli{\'e}, L., and Xu, J. (2015).
\newblock Reconstruction in the labelled stochastic block model.
\newblock {\em IEEE Transactions on Network Science and Engineering},
  2(4):152--163.

\bibitem[\protect\astroncite{Levin et~al.}{2022}]{Levin2022recovering}
Levin, K., Lodhia, A., and Levina, E. (2022).
\newblock Recovering shared structure from multiple networks with unknown edge
  distributions.
\newblock {\em Journal of Machine Learning Research}, 23(3):1--48.

\bibitem[\protect\astroncite{Ma and Wu}{2015}]{ma2015computational}
Ma, Z. and Wu, Y. (2015).
\newblock Computational barriers in minimax submatrix detection.
\newblock {\em The Annals of Statistics}, 43(3):1089--1116.

\bibitem[\protect\astroncite{MacDonald et~al.}{2021}]{MacDonald2022latent}
MacDonald, P.~W., Levina, E., and Zhu, J. (2021).
\newblock {Latent space models for multiplex networks with shared structure}.
\newblock {\em Biometrika}, 109(3):683--706.

\bibitem[\protect\astroncite{{\"O}ng{\"u}r et~al.}{2010}]{ongur2010defaultmode}
{\"O}ng{\"u}r, D., Lundy, M., Greenhouse, I., Shinn, A.~K., Menon, V., Cohen,
  B.~M., and Renshaw, P.~F. (2010).
\newblock Default mode network abnormalities in bipolar disorder and
  schizophrenia.
\newblock {\em Psychiatry Research}, 183(1):59--68.

\bibitem[\protect\astroncite{Power et~al.}{2011}]{power2011functional}
Power, J.~D., Cohen, A.~L., Nelson, S.~M., Wig, G.~S., Barnes, K.~A., Church,
  J.~A., Vogel, A.~C., Laumann, T.~O., Miezin, F.~M., Schlaggar, B.~L., and
  Petersen, S.~E. (2011).
\newblock Functional network organization of the human brain.
\newblock {\em Neuron}, 72(4):665--678.

\bibitem[\protect\astroncite{Tang et~al.}{2009}]{tang2009clustering}
Tang, W., Lu, Z., and Dhillon, I.~S. (2009).
\newblock Clustering with multiple graphs.
\newblock In {\em 2009 Ninth IEEE International Conference on Data Mining},
  pages 1016--1021.

\bibitem[\protect\astroncite{Wang et~al.}{2021}]{wang2019joint}
Wang, S., Arroyo, J., Vogelstein, J.~T., and Priebe, C.~E. (2021).
\newblock Joint embedding of graphs.
\newblock {\em IEEE Transactions on Pattern Analysis and Machine Intelligence},
  43(4):1324--1336.

\bibitem[\protect\astroncite{Xu et~al.}{2020}]{xu2017optimal}
Xu, M., Jog, V., and Loh, P.-L. (2020).
\newblock Optimal rates for community estimation in the weighted stochastic
  block model.
\newblock {\em The Annals of Statistics}, 48(1):183 -- 204.

\bibitem[\protect\astroncite{Zachary}{1977}]{zachary1977information}
Zachary, W.~W. (1977).
\newblock An information flow model for conflict and fission in small groups.
\newblock {\em Journal of anthropological research}, 33(4):452--473.

\bibitem[\protect\astroncite{Zhang et~al.}{2020}]{zhang2014detecting}
Zhang, Y., Levina, E., and Zhu, J. (2020).
\newblock Detecting overlapping communities in networks using spectral methods.
\newblock {\em SIAM Journal on Mathematics of Data Science}, 2(2):265--283.

\end{thebibliography}

\section{Proofs}

The main idea behind the proof of the Theorem \ref{theo:consitency}  is to bound the probability of misclassification of each node under mild conditions over the parameters estimates $\widehat{a}$, $\widehat{b}$, $\widehat{\sigma}^2$ and the initial labeling $e$. The next lemma states how the labels assigned for each node by the pseudo-likelihood algorithm is related with the block sums defined in \eqref{def:block_sum}.

\begin{lemma}\label{lemma:estimate_label} Assume that $\pi_1=\dots=\pi_K=\frac{1}{K}$. Consider the initial labeling $e\in \mathcal{E}_{\gamma}$ and let $\hat{c}(e)$ be the estimate of the labels obtained by \eqref{eq:EM_estimate}. For any $i\in \{1,\dots,n\}$, $l\in\{1,\dots,K\}$ and initial parameter estimates $\,\widehat{a},\widehat{b},\widehat{\sigma}^2\in S_{a,b}$, we have that

\begin{enumerate}
\item Given $c_i=l$, for $\gamma\in(0,1/K)$, $a<b$  or $\gamma\in(1/K,1)$, $a>b$
\[\widehat{c}_i(e)=l\,,\text{ if and only if }\, s_{il}(e) - s_{ik}(e) >0\,, \text{ for all } k\neq l\,. \]
\item Given $c_i=l$, for $\gamma\in (0,1/K)$, $a>b$  or $\gamma\in(1/K,1)$, $a<b$
\[\widehat{c}_i(e)=l\,,\text{ if and only if }\, s_{il}(e) - s_{ik}(e) <0\,, \text{ for all } k\neq l\,. \]
\end{enumerate}
\end{lemma}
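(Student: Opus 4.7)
The plan is to simplify the label assignment rule \eqref{eq:EM_estimate} under the assumptions of a homogeneous WSBM and an initial labeling in $\mathcal{E}_{\gamma}$, showing that it reduces to a one-dimensional comparison between the block sums $s_{ik}(e)$, and then determine the direction of that comparison from the sign of $(\hat a-\hat b)$ and $(K\gamma-1)$.

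First, I would observe that if $e\in\mathcal{E}_{\gamma}$, then each community defined by $e$ contains exactly $m=n/K$ nodes, so $\hat{\pi}_k(e)=1/K$ for every $k$. Moreover, under the homogeneous model the estimator of $\Sigma$ is a single scalar $\hat{\sigma}^2$, and a direct computation from $\hat{\Lambda}_{km}=n\hat R_{m\cdot}\hat\Sigma_{\cdot k}$ shows that $\hat{\Lambda}_{km}=n\hat{\sigma}^2/K$ for all $k,m$. Consequently the terms $\hat{\pi}_k$ and $\frac{1}{2}\log\hat{\Lambda}_{km}$ in \eqref{eq:EM_estimate} do not depend on $k$ and can be dropped from the $\arg\max$.

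Next, under the homogeneous model $\hat B$ has the same block structure as $B$, so $\hat P_{km}$ takes only two values: a diagonal value $\hat P_{\text{d}}$ when $k=m$ and an off-diagonal value $\hat P_{\text{o}}$ when $k\neq m$. Substituting these into the objective in \eqref{eq:EM_estimate} and adding and subtracting $(s_{ik}(e)-\hat P_{\text{o}})^2$, one finds that all terms independent of $k$ cancel, and the $\arg\max$ over $k$ reduces to the linear criterion
\begin{equation*}
\hat c_i(e)=\arg\max_{k=1,\dots,K}\; s_{ik}(e)\,\bigl(\hat P_{\text{d}}-\hat P_{\text{o}}\bigr).
\end{equation*}
Thus $\hat c_i(e)=l$ is equivalent to $\bigl(\hat P_{\text{d}}-\hat P_{\text{o}}\bigr)(s_{il}(e)-s_{ik}(e))>0$ for all $k\neq l$.

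Finally, a short computation using the explicit form of $R(e)$ for $e\in\mathcal{E}_{\gamma}$ yields
\begin{equation*}
\hat P_{\text{d}}-\hat P_{\text{o}}\;=\;\dfrac{n(K\gamma-1)}{K(K-1)}\,(\hat a-\hat b),
\end{equation*}
so the sign of $\hat P_{\text{d}}-\hat P_{\text{o}}$ equals the sign of $(K\gamma-1)(\hat a-\hat b)$. Because $(\hat a,\hat b,\hat\sigma^2)\in S_{a,b}$ forces $\hat a-\hat b$ to share the sign of $a-b$, the four sign combinations $(\gamma \lessgtr 1/K)\times (a\lessgtr b)$ collapse into the two cases stated in the lemma, giving $s_{il}(e)-s_{ik}(e)>0$ in one case and $<0$ in the other. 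The only delicate point is tracking signs through the reduction, but since both multiplicative factors $(K\gamma-1)$ and $(\hat a-\hat b)$ have unambiguous signs under the hypotheses, no further obstacle arises.
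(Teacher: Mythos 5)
Your proof is correct and follows essentially the same route as the paper's: both reduce the update rule \eqref{eq:EM_estimate} to a comparison of block sums by noting that $\hat\pi_k$ and $\hat\Lambda_{km}$ are constant, cancelling the $k$-independent quadratic terms, and reading off the sign of $\hat P_{\mathrm{d}}-\hat P_{\mathrm{o}}=\tfrac{n(\gamma K-1)}{K(K-1)}(\hat a-\hat b)$. The only cosmetic difference is that you complete the square globally over $m$ while the paper cancels terms pairwise between candidate labels $k$ and $l$; the resulting criterion and case analysis are identical.
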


\begin{proof}
Assume that $c_i=l$, for $i \in \{1,\dots,n\}$ and $l \in \{1,\dots,K\}$. For $e\in \mathcal{E}_{\gamma}$, $\gamma \in (0,1)$, the $(r,s)$-th entry of the confusion matrix $R(e)$ is given by
\begin{equation}
R_{rs}=
\begin{cases}
\dfrac{\gamma}{K}, \text{ if } r=s\vspace{0.3cm} \\
\dfrac{(1-\gamma)}{K(K-1)}, \text{ if } r\neq s\,.
\end{cases}
\end{equation}

Let $\widehat{a},\widehat{b}$, $\widehat{\sigma}^2 \in S_{a,b}$ be the estimates of the parameters $a,b,\sigma^2$. Using the symmetry property of $\widehat{B}$, $\widehat{\Sigma}$ and $R$, we have that the $(r,s)$-th entry of $\widehat{P}=n(R(e)\widehat{B})^T$ and $\widehat{\Lambda}=n(R(e)\widehat{\Sigma})^T$ is given by
\begin{equation}\label{eq:P_hat}
\widehat{P}_{rs}=
\begin{cases}
\dfrac{n}{K}(\,\widehat{a}\gamma + \widehat{b}(1-\gamma)\,), \text{ if } r=s\vspace{0.3cm} \\
\dfrac{n}{K(K-1)}(\,\widehat{a}(1-\gamma) + \widehat{b}(K-2+\gamma)\,), \text{ if } r\neq s
\end{cases}
\end{equation}
and
\begin{equation}\label{eq:Lambda_hat}
\widehat{\Lambda}_{rs}= \dfrac{n}{K}\widehat{\sigma}^2\,.
\end{equation}

For any $l \in \{1,\dots, K\}$, we have that
\[\widehat{\pi}_l=\dfrac{1}{n}\sum\limits_{i=1}^n\sum\limits_{k=1}^K \ind\{e_i=l,c_i=k\} = \sum\limits_{k=1}^K R_{lk} = \frac{1}{K}\,.\]

Writing $s_{im}(e)=s_{im}$,  the label update of node $i$, given in \eqref{eq:EM_estimate}, is obtained by
\begin{equation}\label{eq:est_c_hat}
\widehat{c}_i(e) = \arg\min\limits_{k} \left\lbrace \sum\limits_{m=1}^K  \dfrac{(s_{im}-\widehat{P}_{km})^2}{2\widehat{\Lambda}_{km}} + \dfrac{1}{2}\log\widehat{\Lambda}_{km} \right\rbrace\,.
\end{equation}

Using the estimator \eqref{eq:est_c_hat}, we conclude that $\widehat{c}_i(e)=l$, if and only if, for all $k\neq l$
\begin{equation}\label{eq:condition_estimator}
\sum\limits_{m=1}^K  \dfrac{(s_{im}-\widehat{P}_{lm})^2}{2\widehat{\Lambda}_{lm}} - \sum\limits_{m=1}^K  \dfrac{(s_{im}-\widehat{P}_{km})^2}{2\widehat{\Lambda}_{km}}  < \dfrac{1}{2}\sum\limits_{m=1}^K (\log\widehat{\Lambda}_{km} - \log\widehat{\Lambda}_{lm})\,.
\end{equation}

Notice that the RHS of \eqref{eq:condition_estimator} 
is equal to $0$ by \eqref{eq:Lambda_hat}. By symmetry of $\widehat{P}$ , the LRS of \eqref{eq:condition_estimator} can be written as
\begin{equation}\label{eq:eq_squares_p_lambda}
\begin{split}
&\Scale[0.90]{\dfrac{(s_{il}-\widehat{P}_{ll})^2}{2\widehat{\Lambda}_{ll}} - \dfrac{(s_{ik}-\widehat{P}_{kk})^2}{2\widehat{\Lambda}_{kk}} + \dfrac{(s_{ik}-\widehat{P}_{lk})^2}{2\widehat{\Lambda}_{lk}} - \dfrac{(s_{il}-\widehat{P}_{kl})^2}{2\widehat{\Lambda}_{kl}} +\sum\limits_{\substack{m=1 \\ m\neq l, k}}^K  \dfrac{(s_{im}-\widehat{P}_{lm})^2}{2\widehat{\Lambda}_{lm}} - \dfrac{(s_{im}-\widehat{P}_{km})^2}{2\widehat{\Lambda}_{km}}} \\
& \qquad= \dfrac{(s_{il}-\widehat{P}_{ll})^2}{2\widehat{\Lambda}_{ll}} - \dfrac{(s_{ik}-\widehat{P}_{ll})^2}{2\widehat{\Lambda}_{ll}} + \dfrac{(s_{ik}-\widehat{P}_{lk})^2}{2\widehat{\Lambda}_{lk}} - \dfrac{(s_{il}-\widehat{P}_{lk})^2}{2\widehat{\Lambda}_{lk}}\\
&\qquad =  \dfrac{1}{2\widehat{\Lambda}_{ll}}( s_{il} - s_{ik})( s_{il} + s_{ik} - 2\widehat{P}_{ll} ) + \dfrac{1}{2\widehat{\Lambda}_{lk}}( s_{ik} - s_{il})( s_{ik} + s_{il} - 2\widehat{P}_{lk} )\\
& \qquad=  \dfrac{K}{2n\widehat{\sigma}^2}( s_{il} - s_{ik})\left( - 2\widehat{P}_{ll}+ 2\widehat{P}_{lk}  \right)\,.
\end{split} 
\end{equation}

By \eqref{eq:condition_estimator} and \eqref{eq:eq_squares_p_lambda} we conclude that $\widehat{c}_i(e)=l$, if and only if, for all $k\neq l$
\begin{equation}\label{eq:condition_estimator2}
\dfrac{K}{2n\widehat{\sigma}^2}( s_{il} - s_{ik})\left( - 2\widehat{P}_{ll}+ 2\widehat{P}_{lk}  \right) < 0\,.
\end{equation}

Inequality \eqref{eq:condition_estimator2} holds in the following cases:
\begin{enumerate}[label=(\subscript{C}{{\arabic*}})]
\item  $\widehat{P}_{lk} < \widehat{P}_{ll}$ and $s_{il}(e) - s_{ik}(e) >0$
\item $\widehat{P}_{lk} > \widehat{P}_{ll}$ and $s_{il}(e) - s_{ik}(e) <0$\,.
\end{enumerate}

By \eqref{eq:P_hat} and \eqref{eq:Lambda_hat}, we can write
\begin{equation}\label{eq:diff_P_hat}
\begin{split}
\widehat{P}_{lk} - \widehat{P}_{ll} &= \dfrac{n}{K}\left( \widehat{a}\left( \dfrac{1-\gamma K}{K-1} \right) + \widehat{b}\left( \dfrac{1+\gamma K}{K-1} \right) \right)\\
&= \dfrac{n}{K(K-1)} (\widehat{a} - \widehat{b})(1-\gamma K)\,.
\end{split}
\end{equation}

In this case, we have that
\begin{enumerate}[label=(\subscript{C}{{\arabic*}})]
 \setcounter{enumi}{2}
\item $\widehat{P}_{lk} < \widehat{P}_{ll}$\, if\, $\widehat{a} < \widehat{b}$, $\gamma \in (0,1/K)$ or $\widehat{a} > \widehat{b}$, $\gamma \in (1/K,1)\,.$
\item $\widehat{P}_{lk} > \widehat{P}_{ll}$\, if\, $\widehat{a} > \widehat{b}$, $\gamma \in (0,1/K)$ or $\widehat{a} < \widehat{b}$, $\gamma \in (1/K,1)\,.$
\end{enumerate}

By the definition of the set $S_{a,b}$ in \eqref{eq:assumption_parameters} we have that  if $a < b$ then $\widehat{a} < \widehat{b}$. Thus, combining $(C_1)$, $(C_2)$, $(C_3)$ and $(C_4)$ the result follows.
\end{proof}

The next proposition gives an upper bound on the probability of misclassification of each node in the network. This proposition shows that the upper bound does not depend on the initial parameter estimates $\widehat{a},\widehat{b}$ and $\widehat{\sigma}^2$.

\begin{proposition}\label{prop:mis_prob_balanced}
Assume that $\pi_1=\dots=\pi_K=\frac{1}{K}$.
Consider the initial labeling $e\in \mathcal{E}_{\gamma}$ and let $\hat{c}(e)$ be the estimate of the labels obtained by \eqref{eq:EM_estimate}.
For any node $i\in \{1,\dots,n\}$, $a,b\in \mathbb{R}$, $a\neq b$, $\sigma^2>0$, $\gamma\in (0,1)$, $\gamma\neq\frac{1}{K}$, we have that
\begin{equation}
\mathbb{P}(\,\widehat{c}_i(e)\neq c_i\,) \leq  (K-1)\exp\left\lbrace -\dfrac{1}{4} \dfrac{(\gamma K-1)^2}{K(K-1)^2}\dfrac{n(a-b)^2}{\sigma^2}\right\rbrace\,.
\end{equation}
\end{proposition}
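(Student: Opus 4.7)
\textbf{Proof plan for Proposition \ref{prop:mis_prob_balanced}.}

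My plan is to reduce the misclassification event for node $i$ to a union, over $k\neq c_i$, of simple sign events on the block-sum differences $S_{il}(e)-S_{ik}(e)$, then compute the Gaussian distribution of each such difference conditionally on the true labels, and finally apply a standard Gaussian one-sided tail bound together with a union bound. Concretely, suppose $c_i=l$, fix an initial labeling $e\in\mathcal{E}_\gamma$, and fix parameter estimates $(\hat a,\hat b,\hat\sigma^2)\in S_{a,b}$. Lemma \ref{lemma:estimate_label} tells us that $\hat c_i(e)=l$ holds if and only if, for every $k\neq l$, the sign of $s_{il}(e)-s_{ik}(e)$ equals a prescribed sign that depends only on which of the four $(\gamma,a,b)$ cases we are in. Thus
$$\mathbb{P}(\hat c_i(e)\neq l \mid c_i=l)\ \le\ \sum_{k\neq l}\mathbb{P}\bigl(\mathrm{sign}(S_{il}(e)-S_{ik}(e))\ \text{wrong}\ \big|\ c_i=l\bigr),$$
crucially erasing the dependence on $(\hat a,\hat b,\hat\sigma^2)$ in the bound.

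Next, I compute the conditional law of $S_{il}(e)-S_{ik}(e)$ given the whole true label vector $c$ (with $c_i=l$). Since $S_{il}$ and $S_{ik}$ involve disjoint sets of edges incident to $i$ (because each $j\neq i$ carries one label $e_j$), they are conditionally independent Gaussians, so the difference is Gaussian too. Under $e\in\mathcal{E}_\gamma$ with $\pi_k=1/K$ and $m=n/K$, the counts are exactly $|\{j:e_j=l,c_j=l\}|=\gamma m$, $|\{j:e_j=l,c_j\neq l\}|=(1-\gamma)m$, $|\{j:e_j=k,c_j=l\}|=(1-\gamma)m/(K-1)$, and $|\{j:e_j=k,c_j\neq l\}|=m-(1-\gamma)m/(K-1)$. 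A short calculation then gives
\begin{equation}
\mathbb{E}[S_{il}(e)-S_{ik}(e)\mid c_i=l] \;=\; m\Bigl(\gamma-\tfrac{1-\gamma}{K-1}\Bigr)(a-b)\;=\;\frac{n(\gamma K-1)}{K(K-1)}(a-b),
\end{equation}
and $\mathrm{Var}[S_{il}(e)-S_{ik}(e)\mid c]=2m\sigma^2=2n\sigma^2/K$ (up to the negligible $W_{ii}=0$ diagonal term).

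Now I match signs with Lemma \ref{lemma:estimate_label}. In case 1 of the lemma, $(\gamma-1/K)(a-b)<0$, so the mean above is strictly positive, and misclassification corresponds to the event $\{S_{il}-S_{ik}\le 0\}$, a one-sided left tail away from the mean. In case 2, $(\gamma-1/K)(a-b)>0$, so the mean is strictly negative and misclassification corresponds to $\{S_{il}-S_{ik}\ge 0\}$, again a one-sided tail away from the mean of the same magnitude. Applying the standard bound $\mathbb{P}(Z\le 0)\le \exp(-\mu^2/(2\tau^2))$ for $Z\sim \mathscr{N}(\mu,\tau^2)$, $\mu>0$ (and its mirror), with $\mu^2=n^2(\gamma K-1)^2(a-b)^2/(K(K-1))^2$ and $\tau^2=2n\sigma^2/K$, gives
\begin{equation}
\mathbb{P}\bigl(\text{sign wrong for this }k\ \big|\ c_i=l\bigr)\ \le\ \exp\left\{-\frac{n(\gamma K-1)^2(a-b)^2}{4K(K-1)^2\sigma^2}\right\}.
\end{equation}
A union bound over the $K-1$ indices $k\neq l$, followed by taking expectation over $c_i$ (the bound is identical for each value of $c_i$), yields the claimed $(K-1)$ prefactor and completes the proof.

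The only delicate points I anticipate are bookkeeping: (i) tracking the $W_{ii}=0$ diagonal so as not to double-count or miscount by one in the means and variances, (ii) verifying that the Gaussian tail bound is applied to the correct tail in each of the four cases of Lemma \ref{lemma:estimate_label} (the sign of $\gamma K-1$ and the sign of $a-b$ conspire so that $(\gamma K-1)(a-b)$ has the sign opposite to the "correct-classification direction'' in all four cases, making $\mu^2$ the same quantity every time), and (iii) ensuring the bound really is independent of $(\hat a,\hat b,\hat\sigma^2)$ by noting that Lemma \ref{lemma:estimate_label} has already absorbed all dependence on these estimates into the assignment of correct sign direction on the set $S_{a,b}$.
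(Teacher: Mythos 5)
Your proposal is correct and follows essentially the same route as the paper: reduce misclassification to sign events on $S_{il}(e)-S_{ik}(e)$ via Lemma \ref{lemma:estimate_label}, compute the conditional Gaussian law with mean $\tfrac{n(\gamma K-1)}{K(K-1)}(a-b)$ and variance $2n\sigma^2/K$, apply the one-sided Gaussian tail bound, and finish with a union bound over the $K-1$ competing labels. The only blemish is a harmless sign slip in your case discussion (in case 1 of the lemma one actually has $(\gamma-1/K)(a-b)>0$, which is what makes the mean positive and misclassification a left-tail event), but this does not affect the bound since only $\mu^2$ enters.
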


\begin{proof}
Without loss of generality we assume that $c_i=l$, for $i\in\{1,\dots,n\}$ and $l\in \{1,\dots,K\}$. Conditioning on $c_i=l$, we have that
\begin{equation}
S_{il}(e) - S_{ik}(e) \sim \mathscr{N}(P_{ll}-P_{lk}\,,\,\Lambda_{ll}+\Lambda_{lk})\,,
\end{equation}
for all $k\in\{1,\dots,K\}$ and $k\neq l$.

The matrix ${\Lambda}=n(R(e){\Sigma})^T$ can be obtained as in \eqref{eq:Lambda_hat} using the entries $\sigma^2$ of the matrix $\Sigma$. Thus, 
\begin{equation}\label{eq:diff_Lambda}
\Lambda_{ll}+\Lambda_{lk} \,=\, \dfrac{2n\sigma^2}{K}\,.
\end{equation}

In the same way, the entries of the matrix ${P}=n(R(e){B})^T$  can be obtained as in \eqref{eq:diff_P_hat} and we get
\begin{equation}\label{eq:diff_P}
P_{ll}-P_{lk} = \dfrac{n}{K(K-1)} ({a} - {b})(\gamma K-1)\,.
\end{equation}

Using the tail bound of Gaussian random variables \citep[p.~22]{boucheron2013concentration} with mean given by $\eqref{eq:diff_P}$ and variance given by $\eqref{eq:diff_Lambda}$, we obtain
\begin{equation}\label{eq:concentration_1}
\mathbb{P}\left( S_{il}(e) - S_{ik}(e) \geq n(a-b)\left( \dfrac{(\gamma K-1)}{K(K-1)} \right) + t \right) \leq \exp\left\lbrace -\dfrac{Kt^2}{4n\sigma^2}\right\rbrace\,,\quad \text{ if } \quad t\geq 0
\end{equation}
and
\begin{equation}\label{eq:concentration_2}
\mathbb{P}\left( S_{il}(e) - S_{ik}(e) \leq n(a-b)\left( \dfrac{(\gamma K-1)}{K(K-1)} \right) - t \right) \leq \exp\left\lbrace -\dfrac{Kt^2}{4n\sigma^2}\right\rbrace\,,\quad \text{ if } \quad t> 0\,.
\end{equation}\\[-5mm]

Notice that the upper bounds \eqref{eq:concentration_1} and \eqref{eq:concentration_2} do not depend on node $i$ and labels $l$ and $k$. Thus, for $\gamma\in(0,1/K)$, $a<b$ or $\gamma\in(1/K,1)$, $a>b$, applying Lemma \ref{lemma:estimate_label}, the union bound inequality and the concentration inequality \eqref{eq:concentration_2} with $t=n(a-b)\Scale[0.80]{\left( \dfrac{(\gamma K-1)}{K(K-1)} \right)}>0$, we get
\begin{equation}\label{eq:upper_bound_mis_1}
\mathbb{P}(\,\widehat{c}_i(e)\neq l\,) = \mathbb{P}\left( \bigcup_{\substack{k=1\\k\neq l}}^K \{S_{il} - S_{ik} \leq 0\} \right) \leq (K-1)\exp\left\lbrace -\dfrac{1}{4} \dfrac{(\gamma K-1)^2}{K(K-1)^2}\dfrac{n(a-b)^2}{\sigma^2}\right\rbrace\,.
\end{equation}

In the same way, for $\gamma\in (0,1/k)$, $a>b$ and $c>d$ or $\gamma\in(1/K,1)$, $a<b$ and $c<d$, setting $t=-n(a-b)\Scale[0.80]{\left( \dfrac{(\gamma K-1)}{K(K-1)} \right)} >0$, we have that
\begin{equation}\label{eq:upper_bound_mis_2}
\mathbb{P}(\,\widehat{c}_i(e)\neq l\,) = \mathbb{P}\left( \bigcup_{\substack{k=1\\k\neq l}}^K \{S_{il} - S_{ik} > 0\} \right) \leq  (K-1)\exp\left\lbrace -\dfrac{1}{4} \dfrac{(\gamma K-1)^2}{K(K-1)^2}\dfrac{n(a-b)^2}{\sigma^2}\right\rbrace\,.
\end{equation}
The result follows by observing that the upper bounds \eqref{eq:upper_bound_mis_1} and \eqref{eq:upper_bound_mis_2} do not depend on $l$.
\end{proof}

\begin{proof}[Proof of Theorem \ref{theo:consitency}]
The proportion of misclassified nodes given in \eqref{eq:def_loss} can be upper bounded using $\phi$ as the identity permutation function. By Lemma \ref{lemma:estimate_label}, for any initial estimates $\widehat{a},\widehat{b}, \widehat{\sigma}^2 \in S_{a,b}$ the estimated labels $\widehat{c}(e)$ depend only on the block sums using the partition $e$. Using the fact that the upper bound obtained in Proposition \ref{prop:mis_prob_balanced} does not depend on the node $i$, $i=1,\dots,n$, we have that
\begin{equation}
\begin{split}
\mathbb{E}\left[\,L(\widehat{c}\,(e),c)\,\right] \,\leq\, (K-1)\exp\left\lbrace -\dfrac{1}{4} \dfrac{(\gamma K-1)^2}{K(K-1)^2}\dfrac{n(a-b)^2}{\sigma^2}\right\rbrace\,.
\end{split}
\end{equation}

Using the Markov inequality for any $\epsilon > 0$ we conclude that
\begin{equation}
\mathbb{P}\left(\,L(\widehat{c}\,(e),c) > \epsilon\,\right) \leq \dfrac{(K-1)}{\epsilon}\exp\left\lbrace -\dfrac{1}{4} \dfrac{(\gamma K-1)^2}{K(K-1)^2}\dfrac{n(a-b)^2}{\sigma^2}\right\rbrace\,.
\end{equation}

By the union bound inequality we have that\\
\begin{equation}
\mathbb{P}\left(\,\sup\limits_{e\,\in\, \mathcal{E}_{\gamma}}L(\widehat{c}\,(e),c) > \epsilon\,\right) \leq \left|\mathcal{E}_{\gamma}\right|\dfrac{(K-1)}{\epsilon}\exp\left\lbrace -\dfrac{1}{4} \dfrac{(\gamma K-1)^2}{K(K-1)^2}\dfrac{n(a-b)^2}{\sigma^2}\right\rbrace\,.
\end{equation}
\vspace{0.1cm}

The cardinality of $\mathcal{E}_{\gamma}$ is upper bounded by the number of configurations $e\in \{1,\dots,K\}^n$ such that, for each community $k=1,\dots, K$, we have that $\sum\limits_{i=1}^n\ind\{e_i=k,c_i=k\}=\gamma m$. In this way, since $m=n/K$, we have
\begin{equation}
\begin{split}
|\mathcal{E}_{\gamma}| &\leq {m \choose m\gamma}^K(K-1)^{(1-\gamma)mK}\\
&\leq \exp\left[ \, n(\,h(\gamma) + \kappa_{\gamma}(2n/K)+(1-\gamma)\log(K-1\,))\,\right]\,,
\end{split}
\end{equation}
where the last inequality follows from Lemma 6 in the supplementary material of \cite{amini2013pseudo}. The result follows taking $\epsilon=\exp\left\lbrace -\frac{1}{8} \frac{(\gamma K-1)^2}{K(K-1)^2}\frac{n(a-b)^2}{\sigma^2}\right\rbrace$.
\end{proof}

The proof of an upper bound on the probability of misclassification of each node in the case of unbalanced networks is more complex because we can not eliminate the dependence on the initial parameters estimates $\widehat{a},\widehat{b}$ and $\widehat{\sigma}^2$.

\begin{proof}[Proof of Proposition \ref{prop:unb_misclassification}]
Using the estimated matrices $\widehat{B}$ and $\widehat{\Sigma}$, we have that
\begin{equation}\label{eq:unb_matrix_P}
\widehat{P}= n(R(e)\widehat{B})^T = n \left(\begin{array}{cc}
\hat{a}\gamma_1\pi_1 + \hat{b}(1-\gamma_2)\pi_2 & \hat{a}(1-\gamma_1)\pi_1 + \hat{b}\gamma_2\pi_2 \\ 
\hat{b}\gamma_1\pi_1 + \hat{a}(1-\gamma_2)\pi_2 &  \hat{b}(1-\gamma_1)\pi_1 + \hat{a}\gamma_2\pi_2
\end{array}  \right)
\end{equation}
and
\begin{equation}\label{eq:unb_matrix_L}
\widehat{\Lambda}= n(R(e)\widehat{\Sigma})^T = n\widehat{\sigma}^2 \left(\begin{array}{cc}
\widetilde{\pi}_1 & \widetilde{\pi}_2 \\ 
\widetilde{\pi}_1 & \widetilde{\pi}_2
\end{array}  \right)\,.
\end{equation}

Starting the EM algorithm with labels $e\in\mathcal{E}_{\,\gamma_1,\gamma_2}$ and estimated matrices $\widehat{B}$ and $\widehat{\Sigma}$, the algorithm returns after one iteration the label of node $i$ such that
\begin{equation}
\widehat{c}_i(e) = \arg\max\limits_{k=1,2} \left\lbrace \widetilde{\pi}_k - \left( \sum\limits_{m=1}^K  \dfrac{(s_{im}(e)-\widehat{P}_{km})^2}{2\widehat{\Lambda}_{km}} + \dfrac{1}{2}\log\widehat{\Lambda}_{km}\right) \right\rbrace\,.
\end{equation}

Using the estimator defined above and the fact that $\widehat{\Lambda}_{12} = \widehat{\Lambda}_{22}$ and 
$\widehat{\Lambda}_{11} = \widehat{\Lambda}_{21}$, we conclude that $\widehat{c}_i(e)=1$, if and only if,
\begin{equation}\label{eq:unb_condition_estimator}
\sum\limits_{k=1}^2  \dfrac{(s_{ik}(e)-\widehat{P}_{1k})^2}{\widehat{\Lambda}_{1k}} - \sum\limits_{k=1}^2  \dfrac{(s_{ik}(e)-\widehat{P}_{2k})^2}{\widehat{\Lambda}_{2k}}  <  2\log\left(\frac{\widehat{\pi}_1}{\widehat{\pi}_2}\right)\,.
\end{equation}

Computing the difference of squares, we have that
\begin{equation}
\begin{split}
 \dfrac{(s_{i1}(e)-\widehat{P}_{11})^2}{\widehat{\Lambda}_{11}} -  \dfrac{(s_{i1}(e)-\widehat{P}_{21})^2}{\widehat{\Lambda}_{21}} &= \dfrac{1}{n\widehat{\sigma}^2\widehat{\pi}_1}\left[ (s_{i1}(e)-\widehat{P}_{11})^2 -(s_{i1}(e)-\widehat{P}_{21})^2\right]\\
 & = \dfrac{1}{n\widehat{\sigma}^2\widehat{\pi}_1}\left(2s_{i1}(e) - (\widehat{P}_{11}+\widehat{P}_{21})\,\right)\left( \widehat{P}_{21}-\widehat{P}_{11}\right)
\end{split}
\end{equation}
and
\begin{equation}
\begin{split}
 \dfrac{(s_{i2}(e)-\widehat{P}_{12})^2}{\widehat{\Lambda}_{12}} -  \dfrac{(s_{i2}(e)-\widehat{P}_{22})^2}{\widehat{\Lambda}_{22}} 
= \dfrac{1}{n\widehat{\sigma}^2\widehat{\pi}_2}\left(2s_{i2}(e) - (\widehat{P}_{12}+\widehat{P}_{22})\,\right)\left( \widehat{P}_{22}-\widehat{P}_{12}\right)\,.
\end{split}
\end{equation}

After some calculations of the entries of $\widehat{P}$ given in \eqref{eq:unb_matrix_P}, we conclude that
\begin{equation}
\begin{split}
&\widehat{P}_{21}-\widehat{P}_{11} = n(\hat{a} - \hat{b})\left((1-\gamma_2)\pi_2 - \gamma_1\pi_1\right)\\
&\widehat{P}_{22}-\widehat{P}_{12} = n(\hat{a} - \hat{b})\left(\gamma_2\pi_2 - (1-\gamma_1)\pi_1\right)
\end{split}
\end{equation}

By the definition of $\beta_1$ and $\beta_2$ given in \eqref{eq:unb_beta}, the LRS of \eqref{eq:unb_condition_estimator} is given by
\begin{equation}\label{eq:unb_condition_estimator1}
\begin{split}
&\dfrac{(\widehat{a} -\widehat{b})}{\widehat{\sigma}^2\widetilde{\pi}_1\widetilde{\pi}_2}\left[\,\beta_1\left(2s_{i1}(e) - (\widehat{P}_{11}+\widehat{P}_{21})\,\right) - \beta_2\left(2s_{i2}(e) - (\widehat{P}_{12}+\widehat{P}_{22})\,\right)\,\right]\\
& = \dfrac{(\widehat{a} -\widehat{b})}{\widehat{\sigma}^2\widetilde{\pi}_1\widetilde{\pi}_2}\left[\,2\beta_1s_{i1}(e) -2\beta_2s_{i2}(e) - \beta_1\left( \widehat{P}_{11}+\widehat{P}_{21}\right) +\beta_2\left( \widehat{P}_{12}+\widehat{P}_{22}\right) \,\right]\,.
\end{split}
\end{equation}

By \eqref{eq:unb_condition_estimator} and \eqref{eq:unb_condition_estimator1} we conclude that $\widehat{c}_i(e)\neq 1$ if
\begin{equation}
\dfrac{(\widehat{a} -\widehat{b})}{\widehat{\sigma}^2\widetilde{\pi}_1\widetilde{\pi}_2}\left[\,2\beta_1s_{i1}(e) -2\beta_2s_{i2}(e) - \beta_1\left( \widehat{P}_{11}+\widehat{P}_{21}\right) +\beta_2\left( \widehat{P}_{12}+\widehat{P}_{22}\right) \,\right] \geq 2\log\left(\frac{\widetilde{\pi}_1}{\widetilde{\pi}_2}\right)\,,
\end{equation}
and $\widehat{c}_i(e)\neq 2$ if
\begin{equation}
\dfrac{(\widehat{a} -\widehat{b})}{\widehat{\sigma}^2\widetilde{\pi}_1\widetilde{\pi}_2}\left[\,2\beta_1s_{i1}(e) -2\beta_2s_{i2}(e) - \beta_1\left( \widehat{P}_{11}+\widehat{P}_{21}\right) +\beta_2\left( \widehat{P}_{12}+\widehat{P}_{22}\right) \,\right] \leq 2\log\left(\frac{\widetilde{\pi}_1}{\widetilde{\pi}_2}\right)\,.
\end{equation}

Given $c_i=1$, we have that
\begin{equation}
2\beta_1s_{i1}(e) - 2\beta_2s_{i2}(e) \sim \mathscr{N}(2\beta_1P_{11} - 2\beta_2P_{12}\,,\,(2\beta_1)^2\Lambda_{11} + (2\beta_2)^2\Lambda_{12})\,.
\end{equation}

In the same way, given $c_i=2$, we have that
\begin{equation}
2\beta_1s_{i1}(e) - 2\beta_2s_{i2}(e) \sim \mathscr{N}(2\beta_1P_{21} - 2\beta_2P_{22}\,,\,(2\beta_1)^2\Lambda_{21} + (2\beta_2)^2\Lambda_{22})\,.
\end{equation}

Calculating the matrix $\Lambda$ as computed in \eqref{eq:unb_matrix_L} using the matrix $\Sigma$ instead of $\widehat{\Sigma}$, we have
\begin{equation}
\begin{split}
(2\beta_1)^2\Lambda_{11} + (2\beta_2)^2\Lambda_{12} &= (2\beta_1)^2\Lambda_{21} + (2\beta_2)^2\Lambda_{22} \\
&= 4n\sigma^2(\beta_1^2{\pi}_1 + \beta_2^2{\pi}_2)\\
&=4n\sigma^2\tau^2
\end{split}
\end{equation}

Using the tail bound of Gaussian random variables \citep[p.~22]{boucheron2013concentration} we obtain
\begin{equation}\label{eq:unb_concentration_1}
\mathbb{P}\left( 2\beta_1s_{i1}(e) - 2\beta_2s_{i2}(e) > 2\beta_1P_{11} - 2\beta_2P_{12} + t  \,|\, c_i=1\right) \leq \exp\left\lbrace -\dfrac{t^2}{4n\sigma^2\tau^2}\right\rbrace\,,\, t\geq 0
\end{equation}
and
\begin{equation}\label{eq:unb_concentration_2}
\mathbb{P}\left( 2\beta_1s_{i1}(e) - 2\beta_2s_{i2}(e) \leq 2\beta_1P_{21} - 2\beta_2P_{22} - t  \,|\, c_i=2\right) \leq \exp\left\lbrace -\dfrac{t^2}{4n\sigma^2\tau^2}\right\rbrace\,,\, t> 0
\end{equation}

If $\,\widehat{a} > \widehat{b}$ and $t_1= \frac{2\sigma^2\widetilde{\pi}_1\widetilde{\pi}_2}{\widehat{a}-\widehat{b}}\log\left(\frac{\widetilde{\pi}_1}{\widetilde{\pi}_2}\right) +\beta_1\left( \widehat{P}_{11}+\widehat{P}_{21}\right) -\beta_2\left( \widehat{P}_{12}+\widehat{P}_{22}\right)-2\beta_1P_{11} + 2\beta_2P_{12} \geq 0$ we have
\begin{equation}
\mathbb{P}(\widehat{c}_i(e)\neq 1 \,|\, c_i=1) \leq \exp\left\lbrace -\dfrac{t_1^2}{4n\sigma^2\tau^2}\right\rbrace\,.
\end{equation}

Analogously, if $\,\widehat{a} > \widehat{b}$ and 
$$t_2= -\frac{ 2\sigma^2\widetilde{\pi}_1\widetilde{\pi}_2}{\widehat{a}-\widehat{b}}\log\left(\frac{\widetilde{\pi}_1}{\widetilde{\pi}_2}\right) -\beta_1\left( \widehat{P}_{11}+\widehat{P}_{21}\right) +\beta_2\left( \widehat{P}_{12}+\widehat{P}_{22}\right)+2\beta_1P_{21} - 2\beta_2P_{22} > 0$$
 we have
\begin{equation}
\mathbb{P}(\widehat{c}_i(e)\neq 2 \,|\, c_i=2) \leq \exp\left\lbrace -\dfrac{t_2^2}{4n\sigma^2\tau^2}\right\rbrace\,.
\end{equation}

After some calculations using \eqref{eq:unb_matrix_P} and \eqref{eq:unb_beta} to compute $t_1$ and $t_2$ we obtain that $t_1= \frac{2\sigma^2\widetilde{\pi}_1\widetilde{\pi}_2}{\widehat{a}-\widehat{b}}\log\left(\frac{\widetilde{\pi}_1}{\widetilde{\pi}_2}\right) + nF(a,b)$ and $t_2 = -\frac{ 2\sigma^2\widetilde{\pi}_1\widetilde{\pi}_2}{\widehat{a}-\widehat{b}}\log\left(\frac{\widetilde{\pi}_1}{\widetilde{\pi}_2}\right) -nF(b,a)$. The result follows analogously when $\widehat a < \widehat b$.
\end{proof}

\end{document}